\documentclass[11pt, one column]{article}

\usepackage[english]{babel}
\usepackage[utf8x]{inputenc}
\usepackage[T1]{fontenc}

\usepackage[margin=1in]{geometry}

\usepackage{graphicx} 
\usepackage[colorlinks=False]{hyperref} 
\usepackage{amsmath}  
\usepackage{amsfonts} 
\usepackage{amssymb}  
\usepackage{amsthm}
\usepackage[shortlabels]{enumitem}
\usepackage{todonotes}
\usepackage{multirow}
\usepackage{float}
\usepackage{pgfplots}
\usepackage{tikz}
\usetikzlibrary{calc}
\usepackage{authblk}
\usepackage{cite}

\newcommand{\C}{\mathcal{C}}

\newcommand{\F}{\mathcal{F}}

\newcommand{\csm}{\textsc{SC-Max}}

\newcommand{\css}{\textsc{SC-Avg}}

\newcommand{\cmix}{\textsc{SC-Mix}}

\newtheorem{theorem}{Theorem}[section]

\newtheorem{lemma}[theorem]{Lemma}

\newtheorem{mechanism}{Mechanism}[section]
\newtheorem{observation}{Observation}[section]
\newtheorem{example}{Example}[section]

\title{Strategyproof Facility Location and Committee Selection\\ with Mixed Max and Sum Agent Types}

\author{Yue Gruszecki}
\author{Elliot Anshelevich}
\affil{Department of Computer Science, Rensselaer Polytechnic Institute}
\affil{lapisgruszecki@gmail.com, eanshel@cs.rpi.edu}

\date{\today}
\begin{document}
\maketitle

\begin{abstract}
    We study strategic facility location, in which $n$ agents are located in an arbitrary metric space, and the goal is to choose $k$ facilities to minimize the total agent cost. The agents can have two types of individual cost functions: max-type where the agent wants to minimize the maximum distance from themselves to any chosen facility, or sum-type where the agent wants to minimize the average distance to the chosen facilities. The agents are self-interested, however, and both the agent location and the agent type may be private information.

    We provide deterministic strategyproof mechanisms for this setting, and prove bounds on their approximation ratio as compared with the solution minimizing the total agent cost. When agent types are private but their locations are known, we prove that an approximation of $\left(3 -\frac{2}{k}\right)$ is always possible, and a better approximation of $\left(\frac{2}{1-k+\sqrt{k^2-k+1}}-1\right)$ is achievable when we know the {\em fraction} of the agents with each type, but not necessarily the type of each individual agent. These bounds hold for arbitrary $k$ and arbitrary metric distances. When agent locations are private, we instead focus on the line metric, and show that a simple generalization of the median mechanism results in an approximation ratio of 3, even for large $k$ and arbitrary mixes of agent types. Our results show the importance of collecting information about agent types vs about their locations, and show that it is possible to produce good outcomes even without such information. 
\end{abstract}

\section{Introduction}
\label{section:intro}
Strategic facility location captures the essence of many types of problems, including those in social choice, clustering, and placing actual physical facilities. Because of this, facility location with self-interested agents has been studied extensively; see, for example, \cite{chan2021mechanism} for a survey. In a classic strategic facility location setting, we are given a metric space containing a set of $n$ agents $\C$ (sometimes called ``clients''), and a set of possible facility locations $\F$. The goal is to select a set of $k$ facilities from $\F$ such that they optimize some objective function. This setting also captures many problems in spatial voting, where voters can be represented as agents and candidates can be represented as facilities \cite{merrill1999unified, enelow1984spatial}.  In the strategic version, agents are self-interested, meaning that they may misreport their information (e.g., their personal cost function, or their true location). 
Then, the goal becomes to form a strategyproof mechanism, meaning that no agent can improve their cost by misreporting their private information, such that this mechanism always chooses facilities which are a good approximation for the optimal solution. 

Much of the existing work on this topic focused on forming strategyproof mechanisms for a line metric (e.g., \cite{li2025strategy, walsh2021strategy, procaccia2013approximate, moulin1980strategy, church2022review}), and for either $k=1$ \cite{CHAN2023114208,chen2020facility, kanellopoulos2023truthful} or $k=2$ \cite{deligkas2025agent,lotfi2024truthful,xu2021two}.  In this paper, we are instead interested in results for arbitrary $k$. The case when $k \geq 2$, so that multiple facilities must be chosen, can be used to represent committee selection, where the goal is to choose a committee $A\subseteq\F$ of size $k$ that minimizes some objective function (e.g., social cost, maximum of all agent costs, etc). The most commonly used objective is the social cost, which is to minimize $\sum_{i\in \C}cost_i(A)$, where $cost_i(A)$ is the individual cost of agent $i$ for the chosen set of facilities $A$. For example, $cost_i(A)$ could be the maximum distance from $i$ to any location in $A$, the average distance from $i$ to locations in $A$, or something else. Previous work has looked at forming good committees when all the agents have the same type of cost function (e.g., all max or all sum). In this work, however, we want to form good outcomes when {\em agents may have different types of cost functions}.

\subsection{Individual Costs of Different Types}
For the simpler case when $k=1$, 
the commonly used social cost objective asks to select a facility $a\in\F$ which minimizes $\sum_{i\in \C}d(i,a)$, with $d(i,a)$ here being the distance from agent $i$'s true location to facility $a$. In other words, the individual cost of agent $i$ (which we denote by $cost_i$) is assumed to be the distance from agent $i$ to the facility, capturing the fact that the agent prefers facilities which are closer to them (or candidates which are more similar to them in the given ideological metric space). 

For our setting with $k\geq 2$, however, there are many different individual cost functions that can be considered. For a set $A\subseteq\F$ of size $k$, how happy would agent $i$ be if this set of facilities were chosen? How should we define $cost_i(A)$?
One commonly considered individual cost for each agent $i$ is the {\em minimum} distance from $i$ to each of the facilities in $A$, i.e, $cost_i(A) = \min_{a\in A}d(i,a)$ \cite{lu2010asymptotically, procaccia2013approximate, fotakis2014power, tang2020mechanism, xu2021two}. This is used, for example, in the $k-$center and $k-$median problem objectives. However, using the minimum distance may not necessarily capture the case where facilities are {\em heterogeneous}, such that agents need to access all of the chosen facilities instead of just one of them. This occurs when agents need to get different resources from different facilities, when facilities are different types (e.g., a park and a post office, and $i$ wants to use both), or when $A$ represents a committee and $i$ wants {\em all} members on the committee to be similar to them, not just a single member. See \cite{serafino2016heterogeneous} for more discussion of such heterogeneous facilities. 
In this paper, we focus on choosing multiple heterogeneous facilities, and thus consider two different types of individual costs. The first cost is defined as $cost_i(A) = \max_{a\in A}d(i,a)$, also known as the {\it max-variant} \cite{chen2020facility, lotfi2024truthful, zhao2023constrained}. The second is defined as $cost_i(A) = \frac{1}{k}\sum_{a\in A}d(i,a)$, also known as the {\it sum-variant} \cite{zhao2024constrained, serafino2016heterogeneous, xu2021two}. To see why these two cost functions make sense for our setting, note that in both variants the agent cares about minimizing their distance to {\em all} the chosen facilities; in the first case it cares about the maximum distance (all facilities should be at most $x$ away from $i$), and in the second case it wants to minimize the average distance.

Agents with max-variant or sum-variant costs have been studied in numerous previous works; most of this existing work only looks at the simpler case of $k=2$, although there are some notable exceptions (see Related Work). More importantly, all of this previous work assumes that all agents have the same type, i.e., they are either all max-type agents, or sum-type agents.\footnote{To the best of our knowledge, the only existing work in strategic facility location which simultaneously allows agents to have different individual costs is \cite{gai2024two}, see Related Work for details.} What if different agents use different cost functions, however? For example, some agents may want to make sure that they are not too far away from the furthest of the facilities, while others want to make sure that the average of the distances to all the facilities is not too far. Moreover, since we are in the strategic setting, we may not know the type of each agent, and desire a strategyproof mechanism which performs well {\em no matter what types the agents are} (as long as each agent is either max or sum type). Our goal in this paper is to form such mechanisms for (1) the case when agent types are private, but we know their exact locations, and (2) the case when agent types are known, but agent locations are private. Our results show the importance of collecting information about agent types vs about their locations, and show that it is possible to produce good outcomes even without such information. 

\subsection{Our Model}
Let $\C$ be a set of agents, and $\F$ be a set of possible facility locations, both contained in some metric space with distance $d$.\footnote{Note that our upper bounds still hold if $\F$ is a multiset, or even if $\F$ is infinite and continuous, but for ease of exposition we will assume that $\F$ is a finite set.} Let $n=|\C|$. Our goal is to choose a set $A$ of $k\geq 2$ facilities from the set $\F$. 

Every agent is either a max-type agent, or a sum-type agent. Let $M\subseteq \C$ be the set of max-type agents, and $S\subseteq \C$ be the set of sum-type agents. For each agent $i$, let their individual cost be $cost_i(A)$ such that for a max-type agent $i\in M$, $$cost_i(A) = \max_{a\in A} d(i,a),$$ and for a sum-type\footnote{Note that the sum-type agents could be more accurately called ``avg-type'' agents, but we call them sum-type to follow the conventions in existing work (e.g., ``sum-variant''). When all agents are the same type, average vs sum does not make a difference in agent costs. In our setting, however, both max and sum agents exist simultaneously, so it makes sense to normalize their costs by taking the average of the distances to the chosen facilities.} agent $i\in S$, $$cost_i(A) = \frac{1}{k} \sum_{a\in A} d(i,a).$$ 

The social cost of a solution $A$ is then simply the sum of the mixed agent types:

$$\cmix(A) = \sum_{i\in \C}cost_i(A) = \sum_{i\in S} \frac{1}{k} \sum_{a\in A} d(i,a) + \sum_{i\in M} \max_{a\in A} d(i,a).$$

It is also convenient to define the social cost values for the case when {\em all} the agents happen to be sum-type or {\em all} happen to be max-type:

\begin{eqnarray*}
\csm(A) = \sum_{i\in \C} \max_{a\in A} d(i,a)\\
\css(A) = \sum_{i\in \C} \frac{1}{k}\sum_{a\in A} d(i,a)
\end{eqnarray*}




In this paper, we are interested in minimizing {\cmix}, while agents can have both their types and locations as private information. Hence, we will form strategyproof mechanisms where one or both of these kinds of information are kept private to the agents. A mechanism $\mathcal{M}$ takes as input the reports of the agents about their private information, and returns a set of $k$ facilities $A\subseteq\F$. A mechanism is {\em strategyproof} if no agent $i$ has incentive to misreport their private information, i.e., no matter what other agents report, the individual cost of agent $i$ will never be improved by lying.

\subsection{Our Contributions}

\begin{figure}[!ht]
    \centering
    \includegraphics[width=3in]{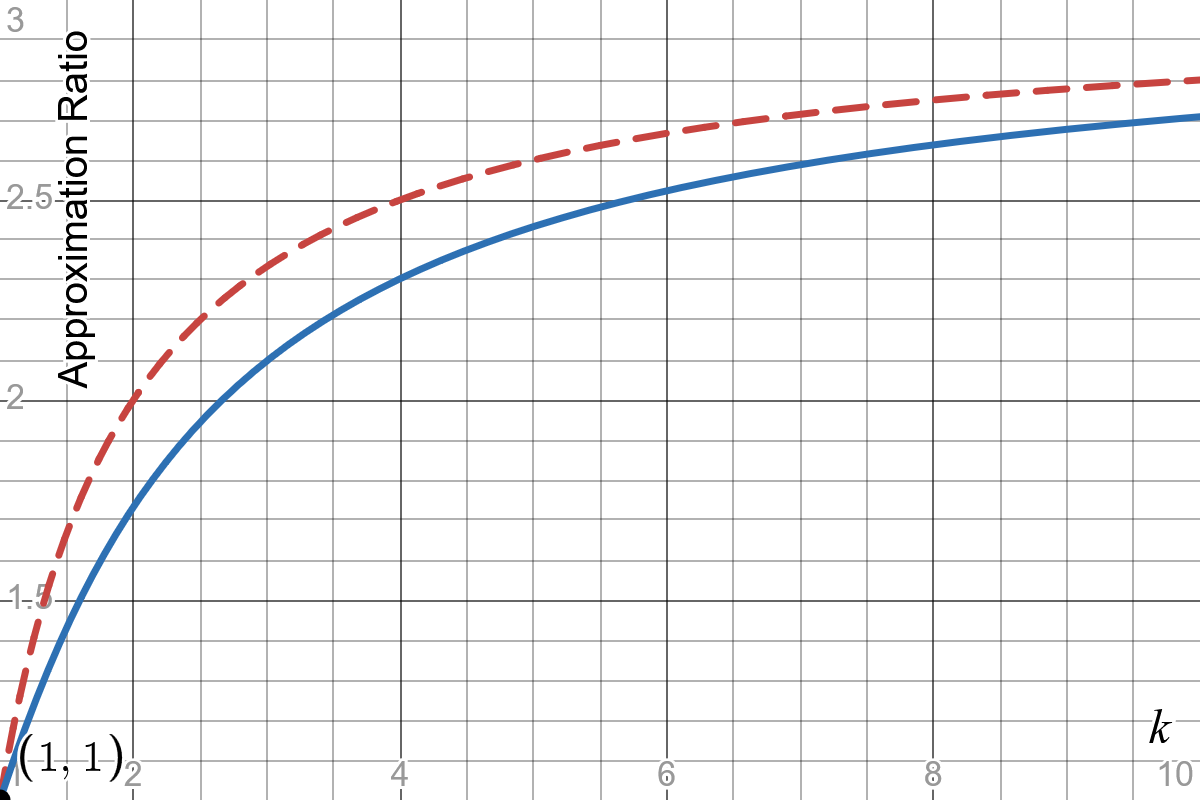}
    \caption{Plots of the approximation ratios for our mechanisms as compared with the optimum {\cmix} solution. The dashed line is the approximation obtained by simply choosing the optimum for {\css}. The solid line is the approximation obtained by choosing either the optimum for {\csm} or {\css}; knowing which one to choose requires knowing the fraction $|M|/n$ of max-type agents, but not which specific agents these are. Both mechanisms do not require access to agent types, and both ratios converge to 3 as $k\rightarrow\infty$. 
    }
    \label{fig:resultk}
\end{figure}

We begin by considering the case when the exact locations of the agents are public knowledge, but the agent types are private and unknown to the mechanism. Note that our results are for an arbitrary metric space (not just for a line metric), and for arbitrary $k$. We give a deterministic strategyproof mechanism that always results in a $\left(3 -\frac{2}{k}\right)$-approximation to the optimum $\cmix$ solution. In fact, this mechanism is extremely simple: it just chooses the optimum solution for \css, which we show is always a good approximation even when many of the agents are actually max-type. See Section \ref{sec:unknownTypes} for a discussion of the techniques used to establish this bound. The above solution can be used when we know absolutely nothing about the agent types (except their being either max or sum agents), but what if we knew the {\em fraction} of max-type agents, without knowing which agents those are? In other words, what if we know $r = |M|/n$ in addition to the agent locations, but not $M$ or $S$ themselves? In that case, we give another strategyproof mechanism that always results in a $\left(\frac{2}{1-k+\sqrt{k^2-k+1}}-1\right)$-approximation in the worst case. Figure \ref{fig:resultk} shows the plots for both bounds w.r.t. $k$. We note that both of these bounds grow from 1 to 3 as $k$ grows larger, but are significantly better than 3 for small $k$. For example, for $k=2$, which is a setting that has received a lot of attention, our mechanisms result in approximations of 2 and $\sqrt{3}$. Even for $k=5$, which is a reasonable size for a set of facilities or a committee, our approximation bounds guarantee solutions within factors of 2.6 and 2.43 of optimum. This also shows that simply knowing the number of each type already improves the approximation ratio significantly as compared with knowing nothing about the agent types. Finally, we provide a lower bound of 1.186 for any deterministic strategyproof mechanism for this setting. While this lower bound is certainly not tight, we remark that it seems to be significantly more difficult to form non-trivial lower bounds for the setting with private agent types and arbitrary $k$ than for the more classic setting with private agent locations.  



Next, we consider the case where the locations of the agents are private. It is already known that even for $k=1$ (and thus for all agent types being the same), it is not possible to form a deterministic strategyproof mechanism with approximation ratio better than $\Omega(n)$ for a general metric space or even for a cycle \cite{schummer2002strategy}. Even for a line metric, it is also not possible to form such a mechanism for $k=1$ with approximation ratio better than 3\cite{feldman2016voting}.
Because these negative results clearly apply to our more general setting as well, we focus on the line metric. We show that for arbitrary $k$, and for an arbitrary mix of agent types between max and sum, there is a simple deterministic strategyproof mechanism which always achieves an approximation ratio of 3 as compared with the optimum for \cmix. This is true even for the case when both agent types and locations are private. In other words, we extend the known 3-approximation bound from \cite{feldman2016voting} for $k=1$ to arbitrarily $k$ and arbitrary mixes of max and sum-type agents. Note that while our mechanism $Med^*$ is a simple generalization of the classic median mechanism for $k=1$, proving that it still results in a 3-approximation for mixed agent types requires new ideas and techniques. This is essentially because the 3-approximation proof from \cite{feldman2016voting} requires ``charging'' costs of some agents to costs of other agents, which is simple to do when all agents have the same type, but becomes significantly more complex when attempting to charge a max agent to a sum agent (since the maximum of $k$ distances can be much larger than the average of those distances). For more details, see Section \ref{sec:agent}. 

Lastly, we consider the case where facilities can be placed anywhere on a line, i.e., the ``unrestricted facilities'' setting. We show that $Med^*$ is strategyproof for any of the cases above on a line metric, and gives the optimal solution for our setting, thus extending the known results for $k=1$ to our more general setting as well.

\subsection{Related Work}
Research on facility location and spatial voting spans many disciplines (including mechanism design) and includes numerous variants, far too extensive to review in full here; for surveys see \cite{farahani2009facility, chan2021mechanism}. The single-facility case ($k=1$) has received the most thorough treatment, while multi-facility settings remain less explored, largely because they introduce significantly greater analytical complexities. With the exception of \cite{gai2024two}, all the work we mention below is on strategyproof mechanisms when agent locations are private, as we are not aware of any other previous work in facility location with private agent costs.\footnote{There is, however, a significant line of work on agents with, often private,  {\em approval} preferences, i.e., where some agents only consider certain facility locations as valid options. See \cite{kanellopoulos2023truthful,serafino2016heterogeneous} and the references therein.} 

For $k=1$, it is well known that, even on a line, the best achievable bound is 3 for the $\css$ objective \cite{feldman2016voting}, and that this bound can be achieved using the median mechanism. The results get much worse beyond a line metric, such that even in circle networks, the approximation ratio for any deterministic strategyproof mechanism is at least $\Omega(n)$, where $n$ is the number of agents \cite{schummer2002strategy}. We show that the approximation ratio of 3 is still achievable even with arbitrary $k$ and mixed agent types, as long as the metric is a line metric. Note that the above results are for ``restricted'' facility locations, where facilities can only be placed at some set $\F$ of locations. If instead facilities can be placed anywhere on a line, then as shown in \cite{procaccia2013approximate},  choosing the median agent location results in the optimal solution for \css, and \cite{schummer2002strategy} gives a strategyproof mechanism that results in the optimal solution for trees. In addition, for $\mathbb{R}^d$ and most $L_p$ metrics, there exist strategyproof mechanisms that can achieve a 3-approximation for this case \cite{gravin2025approximation}. For the very nice setting where facilities can be placed anywhere on a line, we show a simple generalization of the $k=1$ result and prove that a generalization of the median mechanism still forms the optimal solution for $\cmix$, even for mixed agent types.

In this work, we are interested in agents with mixed types of costs, specifically, the maximum distance from an agent to all the facilities (as in \cite{chen2020facility, lotfi2024truthful, zhao2023constrained, han2025compatibility}), and the average (or equivalently, sum, if all agents use the same cost function) of an agent to all the facilities (as in \cite{kanellopoulos2023truthful,lotfi2024truthful, zhao2024constrained,xu2021two, han2025compatibility, serafino2016heterogeneous}). We note that if $k = 1$, then these two costs become the same, which is why this work focuses on the case where multiple facilities should be selected ($k \geq 2$). Most of the work mentioned above only considers the setting with $k=2$, and {\em all} of the work mentioned above only looks at the case when all agents have the same type of cost function (i.e., all sum agents or all max agents). \cite{deligkas2025agent} looks at the case when $k>2$, for the line metric when all agents have the same type (either max or sum). \cite{deligkas2025agent} provides a lower bound of $2-1/k$ and upper bound of 2 for any deterministic strategyproof mechanisms for {\css}, as well as a tight bound of $k+1$ for {\csm}. We note that the bounds in \cite{deligkas2025agent} are incomparable with our 3 bounds, which is due to the following subtle, but important, difference in our settings. In our case, the set of possible facility locations $\F$ is fixed and known, independent of the agents' reported locations. In \cite{deligkas2025agent}, however, the set $\F$ always equals the set of agent locations, and thus changes depending on the agents' reported locations. Having such a fixed set $\F$ allows us to extend the 3-approximation bounds for private agent locations to arbitrary $k$ and arbitrary combinations of max and sum agents, which is provably not possible for the setting studied in \cite{deligkas2025agent}.

All of the above related work is for private agent locations, with all agents having the same type of cost functions. The main results of our paper are instead about the case when agent locations are known, but agent types are private. To our knowledge, strategic facility location with a mix of sum and max type agents has not been considered before. Perhaps the most relevant work to ours would be \cite{gai2024two}, where the authors consider a mix of agents with sum-type agents and {\em min-type} (the cost is the distance from the agent to the {\em closest} facility to them) agents on a line metric. The objective is to minimize the overall social cost, just as in our work. They consider choosing $k=2$ facilities, and assume there is at least one sum type agent. The questions they ask are essentially the same as in our work: what are the approximation bounds achievable for strategyproof mechanisms when either agent types or agent locations are private? When both the agents' locations and types are private, they showed an upper bound of $2n+1$ on this approximation ratio (the lower bound is $n-2$, derived from \cite{fotakis2014power}). When only the agents' locations are private, the upper bound is $n+2$ (the lower bound is also $n-2$). When only the agents' types are private, the upper bound is $2n+1$, and the lower bound is $7/6$. Note that all the upper bounds for the setting in \cite{gai2024two} are $\Omega(n)$, while we show approximation bounds better than 3, even for $k>2$ and an arbitrary metric space instead of a line for the setting when only the types are private. This is entirely due to the fact that we consider max and sum agents, while \cite{gai2024two} consider {\em min} and sum agents: having agents with cost being the distance to the {\em closest} facility results in much worse approximation bounds. Moreover, we believe that max-type agents are much better motivated for heterogeneous facility settings than min-type agents, as described in \cite{chen2020facility}.

\section{Unknown Agent Types}\label{sec:unknownTypes}
We first consider the scenario where the exact locations of the agents are known to the mechanism, but not the agent types. 

Our first result shows that a simple mechanism, which pretends that all agents have the sum type, and chooses the optimum facilities for that case, is a reasonable approximation in terms of the {\cmix} objective. Since this is a strategyproof mechanism (it uses the locations of the agents, which we assume are known in this section, but does not use any information about the true agent types at all), this provides us with a simple way to form a solution which is guaranteed to be a constant approximation to the optimum, for any combination of max and sum agent types. 


\paragraph{Notation:} For compactness of notation, for any agent $i\in \C$ and set of $k$ facilities $A$, we define $d(i,A)=\frac{1}{k}\sum_{a\in A}d(i,a)$ to be the {\em average} of distances from $i$ to $A$. Note that if distances $d$ form a metric, then distances $d(i,A)$ still satisfy the triangle inequality: for agents $i,j\in \C$ we have that $d(i,A)\leq d(i,j)+d(j,A)$ and $d(i,j)\leq d(i,A)+d(j,A)$.  

In this section, let $A$ be the optimal solution for {\css}, $B$ be the optimal solution for {\csm}, and $O$ be the optimal solution for {\cmix}. Our first mechanism ignores any infomation about agent types, and simply returns $A$. To argue that $A$ always has a small {\cmix} cost, we will use the following notation.
For $i \in \C$, let $a_i = \text{argmax}_{a \in A} d(i, a), b_i = \text{argmax}_{b \in B} d(i, b), o_i = \text{argmax}_{o \in O} d(i, o)$. In addition, let $o^* = \operatorname{argmax}_{o \in O} \sum_{i \in \C}d(i, o)$, $a^* = \operatorname{argmax}_{a \in A} \sum_{i \in \C}d(i, a)$, $b^* = \operatorname{argmax}_{b \in B} \sum_{i \in \C}d(i, b)$. 


\vskip 2pt We immediately note the following observation:

\begin{observation}
$\sum_{i\in\C}d(i,a^*)\leq \sum_{i\in\C}d(i,o^*)$    
\label{obs.a*}
\end{observation}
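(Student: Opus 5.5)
The plan is to reduce everything to a single function on facilities. Define $f(x)=\sum_{i\in\C} d(i,x)$ for $x\in\F$. Then
\[
\css(X)=\frac{1}{k}\sum_{x\in X} f(x)
\]
for any $k$-set $X\subseteq \F$. Since $\css$ is additive over the chosen facilities, its optimum $A$ consists of $k$ facilities with the smallest $f$-values. By definition, $a^*$ is the element of $A$ with the largest $f$-value and $o^*$ is the element of $O$ with the largest $f$-value. So the observation says that the largest $f$-value inside an optimal $\css$ set is at most the largest $f$-value inside any $k$-set, here $O$. Nothing specific to $O$ is used.

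The argument is an exchange by contradiction. Suppose $f(a^*)>f(o^*)$. Then every $o\in O$ satisfies $f(o)\le f(o^*)<f(a^*)$, so $O$ contains $k$ distinct facilities with $f$-value strictly below $f(a^*)$. The set $A$ contains $a^*$, so at most $k-1$ elements of $A$ have $f$-value below $f(a^*)$. Hence some $o\in O$ with $f(o)<f(a^*)$ does not lie in $A$. Consider $A'=(A\setminus\{a^*\})\cup\{o\}$. This is a valid set of $k$ facilities, and
\[
\css(A')=\css(A)-\frac{1}{k}\bigl(f(a^*)-f(o)\bigr)<\css(A),
\]
which contradicts the optimality of $A$ for $\css$. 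Therefore $f(a^*)\le f(o^*)$, which is exactly the claimed inequality $\sum_{i\in\C}d(i,a^*)\le\sum_{i\in\C}d(i,o^*)$.

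The only delicate point is the counting step that produces an $o\in O\setminus A$ with $f(o)<f(a^*)$. It relies on the facilities in $O$ being distinct elements of $\F$. If $\F$ is a multiset, I would treat copies as distinct labelled elements and run the same argument. If $\F$ is infinite, I would replace optimality by the same exchange applied to an optimal (or near-optimal) $A$. Neither variant changes the argument.
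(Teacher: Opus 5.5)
Your proof is correct and is essentially the paper's argument: both use that $\css$ is additive over facilities, so $A$ consists of $k$ facilities with the smallest values of $\sum_{i\in\C}d(i,\cdot)$, and both then note that the largest such value over any $k$-set (here $O$) is at least that of $a^*$. You make explicit the exchange and counting step that the paper states only informally. One small caveat: your aside about a near-optimal $A$ would not work, since the exchange then yields no contradiction. It is also unnecessary, because the statement presupposes an optimal $A$, and for that $A$ your argument applies verbatim even when $\F$ is a multiset or infinite.
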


This observation is true since $A$ minimizes $\sum_{i\in \C}\sum_{a\in A} d(i, a) = \sum_{a\in A} \sum_{i\in \C} d(i, a)$, which means it simply consists of the $k$ locations with the smallest average distances to all the agents in $\C$. Thus $\sum_{i\in \C} d(i, a)$ for any $a\in A$ must be at most $\sum_{i\in \C} d(i, a')$ for any $a'\not\in A$. In particular, since $o^*$ has the $k$'th largest distance from the agents of all facilities in $O$, then its distance from the agents must be at least $\sum_{i\in\C}d(i,a^*)$.

Before formally proving a bound on the {\cmix} cost of $A$, we want to motivate our basic approach and proof techniques. Note that if all agents are guaranteed to be of the same type, establishing such bounds is not difficult. If all agents have the sum type, then {\cmix} is the same as {\css}, and so $A$ is indeed the optimum solution. If all agents have the max type, however, it is still not too difficult to obtain such bounds. This was already observed in \cite{han2025compatibility}, but we include a summary of the argument here for completeness. If all agents have the max type, then {\cmix} is the same as {\csm}, so

\[
\begin{aligned}
    \csm(A) &= \sum_{i\in\C}d(i,a_i)\\
    &\leq \frac{1}{n}\sum_{i\in \C}\sum_{j\in\C}[d(i,j) + d(j,a_i)]\\
    &\leq \frac{1}{n}\sum_{i\in \C}\sum_{j\in\C}[d(i,O)+d(j,O)]+\frac{1}{n}\sum_{i\in \C}\sum_{j\in\C}d(j,a^*)\\
    &\leq \frac{1}{n}\sum_{i\in \C}\sum_{j\in\C}[d(i,o_i)+d(j,o_j)]+ \frac{1}{n}\sum_{i\in \C}\sum_{j\in\C}d(j,o^*)\\
    &\leq 2\sum_{i\in \C}d(i,o_i)+ \sum_{i\in \C}d(i,o_i)\\
   & =3\cdot\csm(O)
\end{aligned}
\]
The first inequality above is true due to the triangle inequality, the second due to triangle inequality and definition of $a^*$, the third due to definition of $o_i$ and Observation \ref{obs.a*}, and the last due to definition of $o_i$ again. 

The above provides a relatively simple argument for why $A$ is a good solution to choose even when all the agent types are max instead of sum, since at worst it will result in a 3-approximation. Unfortunately, applying the same argument for the case when agent types are mixed, i.e., can be both sum and max, has several obstacles. One obstacle when attempting to bound $\cmix(A) = \sum_{i\in S} d(i,A) + \sum_{i\in M} d(i,a_i)$, instead of $\csm(A)$, is the term $\sum_{i\in S} d(i,A)$. While we know that $\sum_{i\in \C} d(i,A)\leq \sum_{i\in \C} d(i,O)$ by our choice of $A$, it certainly may be the case that $\sum_{i\in S} d(i,A) > \sum_{i\in S} d(i,O)$. Another obstacle to using the above argument is that if we want to use Observation \ref{obs.a*}, we must look at $\sum_{j\in\C}d(j,a^*)$ over the entire set of agents $\C$, instead of only a subset of agents like $S$ or $M$. Because of this, we necessarily end up with a quantity like $\sum_{i\in \C}d(i,o_i)$, which can be quite large compared to $\cmix(O)= \sum_{i\in S} d(i,O) + \sum_{i\in M} d(i,o_i)$, since for $i\in S$ it can be that $d(i,o_i)$ is much larger that $d(i,O)$. Trying to adapt the above approach directly to mixed agent types results in an approximation bound significantly larger than 3. Because of this, we need a different approach to directly bound $d(i,a_i)$ for agents $i\in M$, which we obtain using the following, somewhat technical, lemma. This lemma allows us to compare the costs of the max-type agents in $A$ with the distances of {\em all} the agents in $A$.


\begin{lemma}
	$\sum_{i \in M} d(i, a_i) \leq \sum_{i\in M}\left[d(i, A) + \frac{1}{n} \sum_{j \in \C} d(j, A) + \frac{k-2}{kn} \sum_{j \in \C} d(j, a_i) \right]$.
	\label{lemma:maxsmaller}
\end{lemma}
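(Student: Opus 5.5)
The plan is to prove the inequality agent by agent: for each fixed $i\in M$ I will show
\[
d(i,a_i)\leq d(i,A)+\frac{1}{n}\sum_{j\in\C}d(j,A)+\frac{k-2}{kn}\sum_{j\in\C}d(j,a_i),
\]
and then sum over $i\in M$. Averaging over $j$ suggests proving a pointwise version first: for every single agent $j\in\C$,
\[
d(i,a_i)-d(i,A)\leq d(j,A)+\frac{k-2}{k}\,d(j,a_i).
\]
Averaging this over all $j\in\C$ (dividing by $n$) gives the per-agent inequality.

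First I rewrite the left-hand side by separating $a_i$ out of the average $d(i,A)$. Since $d(i,A)=\frac{1}{k}\bigl[d(i,a_i)+\sum_{a\in A\setminus\{a_i\}}d(i,a)\bigr]$, we get
\[
d(i,a_i)-d(i,A)=\frac{1}{k}\sum_{a\in A\setminus\{a_i\}}\bigl[d(i,a_i)-d(i,a)\bigr].
\]
So it suffices to bound each of the $k-1$ differences $d(i,a_i)-d(i,a)$.

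For each such $a$, I apply the triangle inequality twice, routing through agent $j$:
\[
d(i,a_i)-d(i,a)\leq d(a,a_i)\leq d(j,a)+d(j,a_i).
\]
Summing over the $k-1$ facilities $a\neq a_i$ gives
\[
\sum_{a\neq a_i}d(j,a)+(k-1)\,d(j,a_i).
\]
The first sum plus one copy of $d(j,a_i)$ equals $k\,d(j,A)$, so the total is $k\,d(j,A)+(k-2)\,d(j,a_i)$. Dividing by $k$ gives exactly the pointwise inequality, with the right constant $\frac{k-2}{k}$.

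I do not expect a serious obstacle. The only delicate point is the bookkeeping that absorbs one copy of $d(j,a_i)$ into $k\,d(j,A)$; this is what produces the coefficient $k-2$ rather than $k-1$. The definition of $a_i$ as the farthest facility in $A$ from $i$ is not needed for the inequality itself, since the argument works for any $a_i\in A$. The choice of $a_i$ matters only because it makes the left-hand side equal to the max-type cost.
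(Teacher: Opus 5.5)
Your proof is correct and is essentially the paper's argument: the triangle inequality through the other facilities of $A$ to get $d(a,a_i)$, then through an arbitrary agent $j$, then averaging over $j\in\C$. The only difference is that you work agent by agent, whereas the paper aggregates via the partition $M(a)$ and the pair counts $|M(a)|+|M(a')|$; your version reaches the same coefficient $\frac{k-2}{k}$ with simpler bookkeeping.
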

\begin{proof}
    For any $a\in A$, let $M(a)$ be the set of agents $i\in M$ so that $a=a_i$, i.e., $a$ is the farthest facility from $i$ in $A$. Break ties arbitrarily, so that the sets $M(a)$ form a partition of $M$. For a fixed $a\in A$, due to the triangle inequality, we have that
    \[
	\begin{aligned}
	\sum_{i \in M(a)} d(i, a) &= \sum_{i \in M(a)} \left[ \frac{1}{k}d(i, a) +  \frac{k-1}{k}d(i, a)\right]\\
		&\leq \sum_{i \in M(a)} \left[ \frac{1}{k}d(i, a) + \sum_{a'\in A, a'\neq a}
        \left(\frac{1}{k}d(i,a') + \frac{1}{k} d(a', a)\right)\right]\\
        &= \sum_{i \in M(a)} d(i,A) + |M(a)|\sum_{a'\in A, a'\neq a}\frac{1}{k} d(a', a)
	\end{aligned}
	\]

    Now, sum the above for all sets $M(a)$. For any $a,a'\in A$, the term $\frac{1}{k}d(a,a')$ appears in this sum exactly $|M(a)|+|M(a')|$ times, so we have that
    \[
	\begin{aligned}
	\sum_{i \in M} d(i, a_i) &= \sum_{a\in A}\sum_{i \in M(a)} d(i, a)\\
    &\leq \sum_{a\in A}\left[ \sum_{i \in M(a)} d(i,A) + |M(a)|\sum_{a'\in A, a'\neq a}\frac{1}{k} d(a', a) \right]\\
    &=\sum_{i\in M}d(i,A) + \frac{1}{k}\sum_{a,a'\in A}(|M(a)|+|M(a')|)d(a',a)
    \end{aligned}
	\]
    Now, if we consider any $j \in \C$, with triangle inequality, we have that 
    \[
    \begin{aligned}
        \frac{1}{k}\sum_{a,a'\in A}(|M(a)|+|M(a')|)d(a',a) &\leq \frac{1}{k}\sum_{a,a'\in A} (|M(a)| + |M(a')|) \left( d(a', j) + d(j, a)\right)\\
        &= \frac{1}{k}\sum_{a\in A}\left[(k-1)|M(a)| + \sum_{a'\in A, a'\neq a}|M(a')|\right]d(j, a)\\
        &= |M|\sum_{a\in A}\frac{1}{k}d(j, a) + \sum_{a \in A}\frac{(k-2)|M(a)|}{k}d(j, a)\\
        &= \sum_{i \in M}d(j,A) + \sum_{a\in A}\sum_{i\in M(a)}\frac{(k-2)}{k}d(j, a)\\
        &= \sum_{i \in M}d(j,A) + \sum_{i \in M}\frac{(k-2)}{k}d(j, a_i)
    \end{aligned}
    \]
    Since the above holds for any $j\in\C$, we have that 
    \[
    \begin{aligned}
        \frac{1}{k}\sum_{a,a'\in A}(|M(a)|+|M(a')|)d(a',a) &\leq \frac{1}{kn}\sum_{j\in \C}\sum_{a,a'\in A} (|M(a)| + |M(a')|) \left( d(a', j) + d(j, a)\right)\\
        &= \frac{1}{n}\sum_{j\in\C}\left(\sum_{i \in M}d(j,A) + \sum_{i \in M}\frac{(k-2)}{k}d(j, a_i)\right)\\
        &= \sum_{i \in M} \left( \sum_{j \in \C}\frac{1}{n}d(j, A) + \sum_{j \in \C}\frac{k-2}{kn}d(j, a_i) \right)
    \end{aligned}
    \]
    Finally, we get 
  
	\[
	\begin{aligned}
		\sum_{i \in M} d(i, a_i) &\leq \sum_{i\in M}d(i,A) + \frac{1}{k}\sum_{a,a'\in A}(|M(a)|+|M(a')|)d(a',a)\\
        &\leq \sum_{i \in M} d(i,A)+ \sum_{i \in M} \left( \sum_{j \in \C}\frac{1}{n}d(j, A) + \sum_{j \in \C}\frac{k-2}{kn}d(j, a_i) \right)\\
        &= \sum_{i\in M}\left[d(i, A) + \frac{1}{n} \sum_{j \in \C} d(j, A) + \frac{k-2}{kn} \sum_{j \in \C} d(j, a_i) \right]
	\end{aligned}
	\]
    as desired.
\end{proof}
Using this lemma, we now proceed to show an upper bound on the approximation ratio resulting from picking solution $A$ as our outcome.

\begin{theorem}
	The optimal solution for {\css} is a $\left(3 -\frac{4}{k} +\frac{|M|}{n}\frac{2}{k}\right)$-approximation for {\cmix}.
	\label{thm:ssbound}
\end{theorem}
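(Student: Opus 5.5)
The plan is to bound $\cmix(A)=\sum_{i\in S}d(i,A)+\sum_{i\in M}d(i,a_i)$ by applying Lemma \ref{lemma:maxsmaller} to the max-type part. Write $r=|M|/n$. Adding $\sum_{i\in S}d(i,A)$ to both sides of the lemma gives
\[
\cmix(A)\le \sum_{i\in\C}d(i,A)+r\sum_{j\in\C}d(j,A)+\frac{k-2}{kn}\sum_{i\in M}\sum_{j\in\C}d(j,a_i).
\]
I will then remove every quantity that depends on $A$.

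For the first two terms, $A$ is optimal for \css, so $\sum_{j\in\C}d(j,A)\le\sum_{j\in\C}d(j,O)$. For the third term, $a^*$ maximizes the column sum over $A$, so $\sum_{j}d(j,a_i)\le\sum_j d(j,a^*)$. Observation \ref{obs.a*} then gives $\sum_j d(j,a^*)\le\sum_j d(j,o^*)$. Hence
\[
\cmix(A)\le (1+r)\,\Sigma+\frac{r(k-2)}{k}\,Y,\qquad \Sigma:=\sum_{j\in\C}d(j,O),\quad Y:=\sum_{j\in\C}d(j,o^*).
\]
Note that $\Sigma\le\cmix(O)$, because $d(i,O)\le d(i,o_i)$ for every max agent.

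The remaining task is to compare $Y$ with $\cmix(O)$, and I expect this to be the main obstacle. The difficulty is that $d(j,o^*)$ can be as large as $k\,d(j,O)$ for a sum-type agent $j$. So I cannot charge each $j$ to its own cost in $O$; I must route $j$ through other agents using the triangle inequality. The natural tool is the averaging trick already used in the \csm{} argument:
\[
d(j,o^*)\le d(j,O)+d(i,O)+d(i,o^*).
\]
I would average this over a carefully chosen set of intermediate agents $i$. For $i\in M$, the term $d(i,o^*)\le d(i,o_i)$ is exactly $i$'s cost in $O$. For $i\in S$, I would instead keep $\sum_j d(j,o^*)$ on the right-hand side and solve for $Y$ self-referentially.

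I would first try averaging over all of $\C$. This gives
\[
Y\le 2\Sigma+\frac1n\sum_{i\in\C}\,n\,d(i,o^*),
\]
which is circular. I would therefore split the average into its $M$ and $S$ parts and use the slack $\frac{1}{k}$ in $d(i,O)=\frac1k\sum_{o}d(i,o)$. In particular, $d(i,O)\ge \frac1k d(i,o^*)$, so $Y\le k\,\Sigma$ trivially; this trivial bound should be interpolated with the averaging bound.

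The target coefficients tell me what to aim for. The final bound must be $3-\frac4k+\frac{2r}{k}$, that is, $(1+r)+\frac{r(k-2)}{k}Y/\cmix(O)$ must collapse to it. So the step I need is roughly $r\,Y\le \frac{2(1-r)k}{k-2}\cdots$. More concretely, I expect to need $\frac{r(k-2)}{k}Y\le(2-\frac4k)\cmix(O)-(r-\frac{2r}{k})\Sigma$. That means the bound on $rY$ should mix $\Sigma$ and $\sum_{i\in M}d(i,o_i)$ with the right weights.

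My plan is to average the triangle inequality over $i\in M$ only, which gives
\[
rY\le r\Sigma+\sum_{i\in M}\bigl(d(i,O)+d(i,o_i)\bigr).
\]
I would then recombine terms, writing $\cmix(O)=\sum_{i\in S}d(i,O)+\sum_{i\in M}d(i,o_i)$. The aim is to collect coefficients so that the $S$ part and the $M$ part are each at most $3-\frac4k+\frac{2r}{k}$ times their cost in $O$, possibly before applying $\Sigma\le\cmix(O)$. If this accounting loses a factor, I will revisit the lemma's term $r\sum_j d(j,A)$ and charge it jointly with the $Y$ term, rather than bounding each separately.
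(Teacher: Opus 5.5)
Your reduction to $\cmix(A)\le(1+r)\Sigma+\frac{r(k-2)}{k}Y$ is exactly the paper's, and averaging the triangle inequality over $M$ is the right tool. However, the way you apply it to $Y$ loses a constant, so the plan as stated does not reach the claimed ratio.

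Write $\Sigma_S=\sum_{j\in S}d(j,O)$, $\Sigma_M=\sum_{j\in M}d(j,O)$ and $X_M=\sum_{i\in M}d(i,o_i)$. Then $\cmix(O)=\Sigma_S+X_M$, and $\Sigma_M$ can be as large as $X_M$. Your bound $rY\le r\Sigma+\Sigma_M+X_M$ then gives at best $rY\le r\Sigma_S+(2+r)X_M$. Substituting this, the coefficient on $X_M$ becomes $1+r+\frac{k-2}{k}(2+r)=3-\frac4k+\frac{2r(k-1)}{k}$. This exceeds the target $3-\frac4k+\frac{2r}{k}$ by $\frac{2r(k-2)}{k}$ whenever $k\ge 3$ and $r>0$; for $r=1$ you get $5-\frac6k$ instead of $3-\frac2k$.

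The loss comes from routing the max agents $j\in M$ through the average as well. Doing so bounds $\sum_{j\in M}d(j,o^*)$ by $2\Sigma_M+X_M$, which can be as large as $3X_M$. But $d(j,o^*)\le d(j,o_j)$ already gives the bound $X_M$ directly.

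The missing step is to split $Y$ by type:
\begin{itemize}
\item For $j\in M$, use $d(j,o^*)\le d(j,o_j)$, which is exactly $j$'s cost in $O$.
\item Apply your averaging over $i\in M$ only to the sum agents $j\in S$. This gives $\sum_{j\in S}d(j,o^*)\le\Sigma_S+\frac{2|S|}{|M|}X_M$.
\end{itemize}
The paper does the same thing in a slightly different form: it uses $Y\le\sum_{j\in\C}d(j,o_j)$, then $d(j,o_j)\le d(j,O)+d(o_j,O)$ for $j\in S$, with $d(o_j,O)$ averaged over $M$.

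This yields $rY\le r\Sigma_S+(2-r)X_M$. The resulting coefficients are $1+r+\frac{(k-2)r}{k}$ on $\Sigma_S$ and $1+r+\frac{k-2}{k}(2-r)=3-\frac4k+\frac{2r}{k}$ on $X_M$. The former is no larger than the latter since $r\le 1$.

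With this split, none of your other ideas are needed. Your fallback of re-charging the $r\sum_j d(j,A)$ term targets the wrong place. The self-referential approach and the interpolation with $Y\le k\Sigma$ are also unnecessary.
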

\begin{proof}
By Lemma \ref{lemma:maxsmaller}, we have that
	\[
	\begin{aligned}
		\cmix(A) &= \sum_{i\in S} d(i,A) + \sum_{i\in M} d(i,a_i)\\
		&\leq \sum_{i\in S}d(i,A) + \sum_{i\in M}\left[d(i, A) + \frac{1}{n} \sum_{j \in \C} d(j, A) + \frac{k-2}{kn} \sum_{j \in \C} d(j, a_i) \right]\\
		&= \sum_{i \in C}d(i, A)  + \sum_{i\in M}\left[\frac{1}{n} \sum_{j \in \C} d(j, A) + \frac{k-2}{kn} \sum_{j \in \C} d(j, a_i) \right]
	\end{aligned}
	\]
	Recall that $A$ is the optimal solution for $\css$, meaning that $\sum_{i \in C}d(i, A) \leq \sum_{i \in C}d(i, O)$. So we have 
	\[
	\begin{aligned}
		\cmix(A) &\leq \sum_{i \in C}d(i, O)  + \sum_{i\in M}\left[\frac{1}{n} \sum_{j \in \C} d(j, O) + \frac{k-2}{kn} \sum_{j \in \C} d(j, a_i) \right]\\
        &= \sum_{i \in C}d(i, O)  + \frac{|M|}{n}\left[ \sum_{i \in S} d(i, O) + \sum_{i \in M} d(i, O) \right] + \sum_{i\in M}\left[\frac{k-2}{kn} \sum_{j \in \C} d(j, a_i)\right]\\
		&\leq \sum_{i \in C}d(i, O)  + \frac{|M|}{n}\left[ \sum_{i \in S} d(i, O) + \sum_{i \in M} d(i, o_i) \right] + \sum_{i\in M}\left[\frac{k-2}{kn} \sum_{j \in \C} d(j, a_i) \right]\\
		&\leq \left( 1 + \frac{|M|}{n}\right) \sum_{i \in S}d(i, O) + \left( 1 + \frac{|M|}{n}\right) \sum_{i \in M}d(i, o_i) + \sum_{i\in M}\left[\frac{k-2}{kn} \sum_{j \in \C} d(j, a_i) \right]
	\end{aligned}
	\]
	The last two inequalities hold because $o_i = \text{argmax}_{o \in O} d(i, o)$, so $d(i,O) \leq d(i,o_i)$. Now, by Observation \ref{obs.a*}, we have that $\sum_{j \in \C} d(j, a^*) \leq \sum_{j \in \C} d(j, o^*)$, and by definition of $a^*$, we have that $\sum_{j \in \C} d(j, a_i) \leq \sum_{j \in \C} d(j, a^*)$ for any $i\in\C$. Therefore, we can see that 
	\[
	\begin{aligned}
		\sum_{i\in M}\left[\frac{k-2}{kn} \sum_{j \in \C} d(j, a_i) \right] &\leq \sum_{i\in M}\left[\frac{k-2}{kn} \sum_{j \in \C} d(j, a^*) \right]\\
		&\leq \frac{(k-2)|M|}{kn} \sum_{j \in \C} d(j, o^*)\\
		&\leq \frac{(k-2)|M|}{kn} \sum_{j \in \C} d(j, o_j)\\
		&= \frac{(k-2)|M|}{kn} \sum_{j \in S} d(j, o_j) + \frac{(k-2)|M|}{kn} \sum_{j \in M} d(j, o_j)
	\end{aligned}
	\]
	This means that 
	\[
	\begin{aligned}
		\cmix(A) &\leq  \left( 1 + \frac{|M|}{n}\right) \sum_{i \in S} d(i, O) + \left( 1 + \frac{|M|}{n} + \frac{(k-2)|M|}{kn}\right) \sum_{i \in M}d(i, o_i) + \frac{(k-2)|M|}{kn} \sum_{i \in S} d(i, o_i)\\
	\end{aligned}
	\]
    The first two terms above are exactly what we want, since they appear in ${\cmix}(O)$. As for the third term, we can bound it as follows. Using the triangle inequality as usual, we have 
	\[
	\begin{aligned}
		\sum_{i \in S} d(i, o_i) &\leq \sum_{i \in S}d(i, O) + \sum_{i \in S}d(o_i, O) \\
		&\leq \sum_{i \in S}d(i, O) + \sum_{i \in S}\left(\frac{1}{|M|} \sum_{j\in M} d(j, o_i) + \frac{1}{|M|} \sum_{j\in M} d(j, O)\right)\\
		&\leq \sum_{i \in S}d(i, O) + \sum_{i \in S}\left(\frac{1}{|M|} \sum_{j\in M} d(j, o_j) + \frac{1}{|M|} \sum_{j\in M} d(j, o_j)\right)\\
		&= \sum_{i \in S}d(i, O) + \frac{2|S|}{|M|} \sum_{i\in M} d(i, o_i)
	\end{aligned}
	\]
	This means that 
	\[
	\begin{aligned}
		\frac{(k-2)|M|}{kn} \sum_{i \in S} d(i, o_i) &\leq \frac{(k-2)|M|}{kn} \sum_{i \in S}d(i, O) + \left(\frac{2(k-2)|S|}{kn} \right) \sum_{i\in M} d(i, o_i)
	\end{aligned}
	\]
	Finally, we can see that
	\[
	\begin{aligned}
		\cmix(A) &\leq  \left( 1 + \frac{|M|}{n} + \frac{(k-2)|M|}{kn}\right) \sum_{i \in S}d(i, O) + \left( 1 + \frac{|M|}{n} + \left( 1 + \frac{|S|}{n}\right)\frac{k-2}{k}\right) \sum_{i \in M}d(i, o_i)\\
		&\leq \left( 1 + \frac{|M|}{n} + \left( 1 + \frac{|S|}{n}\right)\frac{k-2}{k}\right) \cmix(O)\\
        &=\left(2 -\frac{2}{k} +\left(\frac{|M|}{n}+\frac{|S|}{n}\right)\frac{k-2}{k}+\frac{|M|}{n}\frac{2}{k}\right)\cmix(O)\\
        &=\left(3 -\frac{4}{k} +\frac{|M|}{n}\frac{2}{k}\right)\cmix(O)
        \label{eq.sum_opt_detail_bound}
	\end{aligned}
    \]
	
\end{proof}

Now that we have analyzed the quality of $A$, a natural next step is to consider a very similar strategyproof mechanism such that instead of choosing the optimum solution for {\css}, we choose the optimum solution for {\csm}, $B$. We will then consider how to find a trade-off between the two solutions $A$ and $B$, in order to improve the approximation ratio further. 

\begin{theorem}
	The optimal solution for {\csm} is a $\left( 1 + 2\frac{|S|}{|M|}\right)$-approximation for {\cmix}.
	\label{thm:smbound}
\end{theorem}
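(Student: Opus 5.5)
We need to show $\cmix(B)\le (1+2|S|/|M|)\cmix(O)$, where $B$ minimizes $\csm$. The plan is to bound the sum-type agents and the max-type agents of $B$ separately.

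\textbf{Step 1: reduce to a $\csm$ comparison.} For every agent $i$ we have $d(i,B)\le d(i,b_i)$. Hence
\[
\cmix(B)\le \csm(B)\le \csm(O)=\sum_{i\in\C} d(i,o_i),
\]
where the second inequality is the optimality of $B$ for $\csm$. Splitting the last sum gives
\[
\cmix(B)\le \sum_{i\in M} d(i,o_i)+\sum_{i\in S} d(i,o_i).
\]
The first term already appears in $\cmix(O)$. The remaining work is to bound $\sum_{i\in S} d(i,o_i)$, which can exceed the true cost $\sum_{i\in S}d(i,O)$ of the sum agents.

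\textbf{Step 2: charge sum agents to max agents.} This reuses the bound from the proof of Theorem~\ref{thm:ssbound}. First, by the triangle inequality for averaged distances,
\[
d(i,o_i)\le d(i,O)+d(o_i,O).
\]
Next, for each $j\in M$ we have $d(o_i,O)\le d(j,o_i)+d(j,O)\le 2d(j,o_j)$. Averaging over $j\in M$ then gives
\[
\sum_{i\in S} d(i,o_i)\le \sum_{i\in S} d(i,O)+\frac{2|S|}{|M|}\sum_{j\in M}d(j,o_j).
\]

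\textbf{Step 3: combine.} Substituting Step 2 into Step 1,
\[
\cmix(B)\le \sum_{i\in S}d(i,O)+\Big(1+\frac{2|S|}{|M|}\Big)\sum_{i\in M}d(i,o_i)\le \Big(1+\frac{2|S|}{|M|}\Big)\cmix(O).
\]
The only subtle point is that Step 2 needs $M\neq\emptyset$. If $M$ is empty the claimed ratio is infinite, so the statement holds trivially.
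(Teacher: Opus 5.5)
Your proposal is correct and matches the paper's proof essentially step for step. It bounds $\cmix(B)\le\csm(B)\le\csm(O)$ and then charges $\sum_{i\in S}d(o_i,O)$ to the max agents via $d(o_i,O)\le d(j,o_i)+d(j,O)\le 2d(j,o_j)$, averaged over $j\in M$. Your remark that $M=\emptyset$ is trivial is a harmless addition the paper leaves implicit.
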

\begin{proof}
	We have
	\[
	\begin{aligned}
		\cmix(B) &= \sum_{i\in S} d(i,B) + \sum_{i\in M} d(i,b_i)\\
		&\leq \sum_{i\in S} d(i,b_i) + \sum_{i\in M} d(i,b_i)
	\end{aligned}
	\]
	The last inequality holds by definition of $b_i$. Now, recall that $B$ is the optimal solution for {\csm}, meaning that $\sum_{i\in S} d(i,b_i) + \sum_{i\in M} d(i,b_i) \leq \sum_{i\in S} d(i,o_i) + \sum_{i\in M} d(i,o_i)$. Hence, combined with triangle inequality, we can see that
	\[
	\begin{aligned}
		\cmix(B) &\leq \sum_{i\in S} d(i,o_i) + \sum_{i\in M} d(i,o_i)\\
		&\leq \sum_{i\in S}  d(i,O) + \sum_{i\in S} d(O,o_i) + \sum_{i\in M} d(i,o_i)\\
		&= \cmix(O) + \sum_{i\in S} d(O,o_i)\\
		&\leq \cmix(O) + \sum_{i\in S} \left[\frac{1}{|M|}\sum_{j \in M}d(j,O) + \frac{1}{|M|}\sum_{j \in M}d(j,o_i)\right]\\
		&\leq \cmix(O) + \sum_{i\in S} \left[\frac{1}{|M|}\sum_{j \in M}d(j,o_j) + \frac{1}{|M|}\sum_{j \in M}d(j,o_j)\right]\\
		&= \cmix(O) + \frac{|S|}{|M|}\sum_{j \in M}d(j,o_j) + \frac{|S|}{|M|}\sum_{j \in M}d(j,o_j)\\
		&\leq (1 + 2\frac{|S|}{|M|})\cmix(O)
	\end{aligned}
	\]
\end{proof}

Simply taking the optimum solution for $\csm$ may not result in a good approximation bound for $\cmix$: the upper bound in Theorem \ref{thm:smbound} can be arbitrarily bad as $|M|$ becomes small. In fact, there are simple examples when solution $B$ is $\Omega(n)$ times worse than the optimum solution for $\cmix$. 

Would it be possible to combine the insights from Theorems \ref{thm:ssbound} and \ref{thm:smbound}, however, by sometimes selecting solution $A$ and sometimes solution $B$? If we truly know nothing about the agent types, then there is no way to tell when $B$ should be chosen instead of $A$. But if we know the value $r=|M|/n$, i.e., if we know the {\em fraction} of max players in the system, without knowing the types of any specific players, then better approximation bounds become possible. This is often a reasonable assumption, since estimating the proportion of each type of voter or agent can be done at the macro level, while determining the type of each individual voter can be much more difficult. Specifically, we can show the following theorem.


\begin{theorem}
    There exists a deterministic strategyproof mechanism which always returns a solution within factor $\left(\frac{2}{1-k+\sqrt{k^2-k+1}}-1\right)$ of the optimum for $\cmix$. This mechanism does not require the knowledge of individual agent types, but requires the true value of $r = |M|/n$.
    \label{thm:ratio}
\end{theorem}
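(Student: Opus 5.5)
The mechanism switches between the two fixed solutions already analyzed, using only the known fraction $r=|M|/n$ and the (public) agent locations. Let $r^* = 1-k+\sqrt{k^2-k+1}$. The mechanism outputs $A$, the optimum for \css, if $r \le r^*$, and outputs $B$, the optimum for \csm, if $r > r^*$. Strategyproofness is immediate. The output depends only on the agent locations, which are known in this section, and on the true value $r$, which is given to the mechanism. It never uses any individual agent's reported type, so no agent can change the outcome by misreporting.

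For the approximation ratio, write the two earlier bounds as functions of $r$. Theorem \ref{thm:ssbound} gives $f(r) = 3-\frac{4}{k}+\frac{2r}{k}$ for $A$. Theorem \ref{thm:smbound} gives $g(r) = 1+2\frac{|S|}{|M|} = 1+\frac{2(1-r)}{r} = \frac{2}{r}-1$ for $B$. Since $f$ is nondecreasing and $g$ is decreasing in $r$, the mechanism's guarantee $\max_r \min(f(r),g(r))$ is attained where the two curves cross.

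To find the crossing point, set $f(r)=g(r)$ and multiply by $kr$. This gives $4kr-4r+2r^2 = 2k$, that is,
\[
r^2+2(k-1)r-k=0,
\]
whose positive root is $r^* = 1-k+\sqrt{(k-1)^2+k} = 1-k+\sqrt{k^2-k+1}$. One should check that $r^*\in(0,1]$. This holds because $k^2-k+1 > (k-1)^2$ and $k^2-k+1 \le k^2$ for $k\ge 1$. The two cases then give:
\begin{itemize}
\item For $r\le r^*$, monotonicity of $f$ gives ratio at most $f(r^*)$.
\item For $r>r^*$, monotonicity of $g$ gives ratio at most $g(r^*)$.
\end{itemize}
Both equal $\frac{2}{r^*}-1 = \frac{2}{1-k+\sqrt{k^2-k+1}}-1$.

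Two edge cases need attention. The first is $|M|=0$, where Theorem \ref{thm:smbound} is undefined. Here $r=0\le r^*$, so the mechanism outputs $A$, which is optimal. The second is $r=1$, where $B$ is exactly optimal anyway. There is no real obstacle in this argument. The only step needing care is verifying the direction of monotonicity and that $r^*$ lies in $(0,1]$, so the threshold rule correctly picks the smaller of the two bounds.
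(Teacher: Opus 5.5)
Your proposal is correct and follows essentially the same route as the paper: it selects $A$ or $B$ according to whichever of the bounds from Theorems~\ref{thm:ssbound} and~\ref{thm:smbound} is smaller at the known $r$, and bounds the worst case over $r$ at the crossing point. You also make explicit what the paper leaves implicit: the threshold $r^*=1-k+\sqrt{k^2-k+1}$ from $r^2+2(k-1)r-k=0$, the check that $r^*\in(0,1]$, the monotonicity argument, and the $|M|=0$ edge case.
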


\begin{proof}
    Theorems \ref{thm:ssbound} and \ref{thm:smbound} provide us with an approximation bound of at most $3-4/k +2r/k$ if selecting solution $A$, and an approximation bound of $1+2\frac{|S|}{|M|} = 1+2\frac{n-|M|}{|M|} = 1+ 2(\frac{1}{r}-1) = \frac{2}{r}-1$ if selecting solution $B$. Since we know the value of $r$, we can always choose a solution with a bound of $\min\{3-\frac{4}{k} +\frac{2r}{k}, \frac{2}{r}-1\}$ by choosing one of $A$ or $B$ appropriately. By taking the maximum over all $r\in[0,1]$, we see that $\min\{3-\frac{4}{k} +\frac{2r}{k}, \frac{2}{r}-1\}\leq \frac{2}{1-k+\sqrt{k^2-k+1}}-1$, as desired.
\end{proof}

Next, we will show a lower bound of the approximation ratio for {\em any} deterministic strategyproof mechanism. 

\begin{example} Consider an example where $k = 2$ and there are three possible facility locations $A,B,C$ on a line. $A$ and $B$ are on top of (or distance $\epsilon$ away from) each other, and $C$ is distance 2 away from them. All agents are located on top of (or very close to) these locations. Let $a\cdot n$ be the number of max type agents at $A$ or $B$, $b\cdot n$ be the number of max type agents at $C$, $c\cdot n$ be the number of sum type agents at $A$ or $B$, and $d\cdot n$ be the number of sum type agents at $C$, with $p = b+d$, $q = a + c$.
\label{ex.strategyproof}\end{example}

Now consider any deterministic strategyproof mechanism. The locations of all agents are publicly known (see Section \ref{sec:agent} for analysis when these are private), so the agents only report their types. In other words, for Example \ref{ex.strategyproof}, the input to the mechanism can be considered to be just the numbers $a,b,c,d$, since this exactly captures the locations and types of the agents. Note that the values $p$ and $q$ are public knowledge since they can be obtained from the agent locations, but agents can lie about their types and thus misrepresent the values $a,b,c,d$. For Example \ref{ex.strategyproof}, all strategyproof mechanisms must behave as follows: they compare the reported fraction of max type agents at $C$ to a fixed cutoff, and return a solution based only on this. This behavior is formalized in the following lemma. 

\begin{lemma}
    For any deterministic strategyproof mechanism, the must exist some function $g: \mathbb{N}_0 \rightarrow [0,1]\cup\{-\infty\}$ so that for Example \ref{ex.strategyproof}, the mechanism produces the outcome $\{A,B\}$ if $b > g(p)$, and outcome $\{A,C\}$ (or equivalently, $\{B,C\}$) otherwise.
    \label{lemma:sp}
\end{lemma}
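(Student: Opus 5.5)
The plan is to use the fact that in Example \ref{ex.strategyproof} agent locations are public, so the only inputs an agent can manipulate are the type counts. Since $p$ and $q$ are determined by the public locations, the mechanism's input reduces to the pair $(a,b)$, with $c=q-a$ and $d=p-b$. There are only three possible outcomes: $\{A,B\}$, $\{A,C\}$ and $\{B,C\}$. Because $A$ and $B$ coincide (up to $\epsilon$), $\{A,C\}$ and $\{B,C\}$ give every agent the same cost, so I will treat them as a single outcome.

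First I will tabulate each agent's cost under the two outcomes.
\begin{itemize}[leftmargin=*]
\item Agents at $A$/$B$, of either type, have cost $0$ under $\{A,B\}$. Under $\{A,C\}$ a max agent has cost $2$ and a sum agent has cost $1$. So both types at $A$/$B$ strictly prefer $\{A,B\}$.
\item Agents at $C$ have cost $2$ under $\{A,B\}$, whatever their type. Under $\{A,C\}$ a max agent has cost $2$ and a sum agent has cost $1$. So sum agents at $C$ strictly prefer $\{A,C\}$, and max agents at $C$ are indifferent.
\end{itemize}

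\textbf{Step 1: the outcome does not depend on $a$.} Fix $b$ and $p,q$, and suppose that changing one agent at $A$/$B$ from max to sum (or back) changes the outcome. Then one of the two reports gives $\{A,B\}$ and the other gives $\{A,C\}$. The agent whose truthful report yields $\{A,C\}$ could misreport its type and obtain $\{A,B\}$, which is strictly better for it whichever type it really is. This contradicts strategyproofness. Chaining such single-agent changes shows the outcome is a function $f_p(b)$ of $b$ alone, for each public profile.

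\textbf{Step 2: the set of $b$ giving $\{A,B\}$ is upward closed.} Suppose $f_p(b)=\{A,B\}$ but $f_p(b')=\{A,C\}$ for some $b'>b$. Then there are adjacent values with $f_p(b_0)=\{A,B\}$ and $f_p(b_0+1)=\{A,C\}$. A sum agent at $C$ under the truthful profile with count $b_0$ could report max type, moving the count to $b_0+1$ and obtaining $\{A,C\}$. This improves its cost from $2$ to $1$, a contradiction. The reverse change is harmless: a truthful max agent at $C$ is indifferent between the outcomes, and a sum agent at $C$ would only lose by the corresponding lie. Hence $\{b: f_p(b)=\{A,B\}\}$ is upward closed. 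I will define $g(p)$ as the largest $b$ with $f_p(b)=\{A,C\}$, and set $g(p)=-\infty$ if no such $b$ exists. This gives exactly the claimed threshold form, with $b$ normalized as a fraction so that $g$ takes values in $[0,1]\cup\{-\infty\}$.

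\textbf{Main obstacle: the $\epsilon$ and tie issues.} I need to make sure the reduction to two outcomes is valid. Agents slightly off the exact points could in principle distinguish $\{A,C\}$ from $\{B,C\}$, or turn an indifference into a strict preference. My plan is to place all agents exactly on the points, so that $A$ and $B$ are genuinely interchangeable. Then the only preference used is the strict one shown above, and max agents at $C$ play no role in either step. Since the lemma only needs to hold for this example, this choice is enough.
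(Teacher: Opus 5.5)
Your proposal is correct and follows the paper's proof essentially step for step. You first use a single-type-change deviation by an agent at $A$/$B$, who strictly prefers $\{A,B\}$ whatever its type, to show the outcome is independent of $a$; you then use a sum agent at $C$ misreporting as max to rule out a switch from $\{A,B\}$ to $\{A,C\}$ as $b$ increases, which gives the threshold $g(p)$. Like the paper, you implicitly treat the mechanism as a function of the type counts (i.e., anonymous). Your explicit choice to place $A$, $B$ and all agents exactly on the points, so that $\{A,C\}$ and $\{B,C\}$ are genuinely interchangeable, is a sensible clarification that the paper leaves implicit.
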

\begin{proof}
    Recall that the mechanism knows the exact locations of the agents, but doesn't know their types. This means that we know the values of $p,q$. 
    We also observe that there are only two possible outcomes, $\{A,B\}$ and $\{A,C\}$ (or equivalently, $\{B,C\}$). 

    Let $\mathcal{M}$ be an arbitrary deterministic strategyproof mechanism. This mechanism knows $p$ and $q$, so we can think of it as a function $\mathcal{M}(a,b,c,d)$ that returns either $\{A,B\}$ or $\{A,C\}$. First, we will argue that the outcome of $\mathcal{M}$ cannot depend on $a$ or $c$. Suppose to the contrary, that $\mathcal{M}(a_1,b,c_1,d) = \{A,B\}$ and $\mathcal{M}(a_2,b,c_2,d) = \{A,C\}$ with $p=b+d$ and $q=a_1+c_1=a_2+c_2$. In other words, suppose that when everything else is fixed, different values of $a$ and $c$ change the mechanism outcome. If such a case exists, then there must also exist the same situation where either $a_1n=a_2n+1$ or $a_2n=a_1n+1$, i.e., the two inputs differ by a single player type. Both max-type and sum-type agents located at $A$ or $B$ prefer the solution $\{A,B\}$. Thus, if the number of true types is $a_2$ and $c_2$, each player located at $A$ or $B$ would have incentive to lie about their type to instead misreport and form the input $a_1$ and $c_1$. This is a contradiction with $\mathcal{M}$ being strategyproof, so we know that $\mathcal{M}(a_1,b,c_1,d) = \mathcal{M}(a_2,b,c_2,d)$, and thus we can just consider $\mathcal{M}$ as having $b$ and $d$ as input. Moreover, since we know $p$ and $p=b+d$, then $d=p-b$ and we can just consider $\mathcal{M}$ equivalently to only be a function of $b$.
    
    We now need to show that  $\mathcal{M}$ is a monotone function of $b$. In particular, we want to show that there is some threshold $g(p)$ so that for $b> g(p)$ we have $\mathcal{M}(b)=\{A,B\}$, and below this threshold $\mathcal{M}(b)=\{A,C\}$ (or $\{B,C\}$). Suppose to the contrary, that $\mathcal{M}(b_1) = \{A,B\}$ and $\mathcal{M}(b_2) = \{A,C\}$ with $p=b_1+d_1=b_2+d_2$ and $b_2>b_1$. In other words, suppose that for the same $p$ and $q$ values the mechanism may return $\{A,B\}$ for a smaller $b$ value, but $\{A,C\}$ for a larger $b$ value. If such a case exists, then there must also exist the same situation where $b_2n=b_1n+1$, i.e., the two inputs differ by a single player type.

    Consider any sum-type agent located at $C$ with the true agent type amounts being $b_1$ and $d_1$. We know such an agent exists since $b_1<b_2\leq p$, and so $d_1=p-b_1>0$. 
    This agent prefers the outcome $\{A,C\}$ to the truthful outcome $\mathcal{M}(b_1)=\{A,B\}$. Thus, it has incentive to lie, report that it is a max-type agent, and form the outcome $\mathcal{M}(b_2)=\{A,C\}$ instead. This is a contradiction with $\mathcal{M}$ being strategyproof. Thus there must be some threshold $x$ below which $\mathcal{M}(x)=\{A,C\}$ and above which $\mathcal{M}(x)=\{A,B\}$. Note that $x$ may equal $1$ (in which case $\mathcal{M}$ always returns $\{A,C\}$) or $-\infty$ (in which case $\mathcal{M}$ always returns $\{A,B\}$), so the mechanism may return the same outcome always. As shown above, the value of this threshold cannot depend on $a$ or $c$, but can depend on $p$ and $q=1-p$, thus we can express it as a function $g(p)$.
    \end{proof}

\begin{theorem}
   For every deterministic strategyproof mechanism $\mathcal{M}$, there exists an instance of Example \ref{ex.strategyproof} (i.e., values $a,b,c,d$) so that the outcome of this mechanism 
    has an approximation ratio at least 1.186 as compared with the optimum {\cmix} solution.
    \label{thm:genlowb}
\end{theorem}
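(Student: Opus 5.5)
The plan is to use Lemma \ref{lemma:sp} to reduce every deterministic strategyproof mechanism to a threshold rule $g(p)$. Then, for a well-chosen value of $p$, I show that every possible threshold gives ratio at least about $1.186$ on some instance of Example \ref{ex.strategyproof}. Throughout I take $n$ large, so that $b$ can be chosen on a grid of spacing $1/n$ and rounding errors vanish.

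\textbf{Step 1: costs of the two outcomes.} All costs below are divided by $n$.
\begin{itemize}
\item Outcome $\{A,B\}$: every agent at $A$ or $B$ has cost $0$, and every agent at $C$ has cost $2$ whatever its type. So $\cmix(\{A,B\}) = 2p$.
\item Outcome $\{A,C\}$: every max agent has cost $2$, and every sum agent has cost $1$, at either location. So $\cmix(\{A,C\}) = 2a+c+2b+d = 2a+c+p+b$.
\end{itemize}
The optimum is the smaller of these two values.

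\textbf{Step 2: the adversary's response to a threshold $t=g(p)$.} The adversary is free to choose $a$ and $c$ with $a+c=q=1-p$, because by Lemma \ref{lemma:sp} they do not affect the outcome.
\begin{itemize}
\item If $b$ is just above $t$, the mechanism returns $\{A,B\}$. The adversary sets $a=0$, $c=q$, so the alternative costs $q+p+b=1+b$. The ratio tends to $2p/(1+t)$. If $t=-\infty$ (the mechanism always returns $\{A,B\}$), take $b=0$ instead, giving ratio $2p$.
\item If $b=t$, the mechanism returns $\{A,C\}$. The adversary sets $a=q$, $c=0$, so the mechanism's cost is $2q+p+b=2-p+t$. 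The ratio is $(2-p+t)/(2p)$, whenever this exceeds $1$.
\end{itemize}
The first ratio decreases in $t$ and the second increases in $t$. Hence the best threshold for a fixed $p$ either equalizes them, $4p^2=(1+t)(2-p+t)$, or sits at a boundary. The available choices are $t\in[0,p]$ or $t=-\infty$.

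\textbf{Step 3: choosing $p$.} I take $p$ at the point where the equalizing threshold is exactly $t=0$. Setting $t=0$ in the balance equation gives $4p^2=2-p$, so $p=(\sqrt{33}-1)/8\approx 0.593$. I then check each option for the mechanism:
\begin{itemize}
\item $t=-\infty$ gives ratio $2p$.
\item $t=0$ gives $\max\{2p,(2-p)/(2p)\}$, which also equals $2p$ by the choice of $p$.
\item $t>0$ only increases the second ratio, $(2-p+t)/(2p)$.
\end{itemize}
So every threshold has ratio at least $2p=(\sqrt{33}-1)/4\approx 1.186$.

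\textbf{Expected obstacle.} The main care point is the boundary and discreteness bookkeeping. Strictness "$b>g(p)$" means the mechanism's $\{A,B\}$ region starts one grid step above $t$, which costs only an $O(1/n)$ loss. The degenerate thresholds $t=-\infty$ and $t\ge p$ must also be covered; the latter gives $(2-p+p)/(2p)=1/p>1.186$. I also need to verify in each case that the claimed optimum is really the smaller of the two outcome costs.
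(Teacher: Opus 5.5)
Your proposal is correct and is essentially the paper's proof. The paper fixes $\hat p=0.593$, the rounded root of your balance equation $4p^2=2-p$, and uses exactly your instances: $(a,b,c,d)=(0,0,1-\hat p,\hat p)$ against a mechanism that always returns $\{A,B\}$, $(1-\hat p,x,0,\hat p-x)$ against a threshold $x\in[0,\hat p)$, and $(1-\hat p,\hat p,0,0)$ against $x\ge\hat p$.

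The one thing to tidy is that your $p=(\sqrt{33}-1)/8$ is irrational, so it cannot equal $b+d$ for integer agent counts. Take instead a rational $p\in[0.593,\,0.5931]$, e.g.\ $0.593$ with $n$ a multiple of $1000$. On that interval both $2p$ and $(2-p)/(2p)$ remain at least $1.186$, thanks to the slack $(\sqrt{33}-1)/4\approx 1.18614>1.186$.
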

\begin{proof}
    Let $\hat{p}=0.593$, and fix an arbitrary deterministic strategyproof mechanism $\mathcal{M}$. By Lemma \ref{lemma:sp}, the behavior of this mechanism on instances of Example \ref{ex.strategyproof} is determined entirely by a single value $g(p)$. Let $x=g(\hat{p})$ for the above value $\hat{p}$.
    
    We will consider three cases: (i) $x=-\infty$, (ii) $x\geq\hat{p}$, and (iii) $0 \leq x < \hat{p}$. First, we will consider case (i) $x=-\infty$. By Lemma \ref{lemma:sp}, the outcome of $\mathcal{M}$ would be $\{A,B\}$ always. 
    For the instance of Example \ref{ex.strategyproof} with $a=0$, $b=0$, $c=1-\hat{p}$, $d=\hat{p}$, the approximation ratio of $\mathcal{M}$ compared to the optimum solution would be 
    \[
    \begin{aligned}
        \frac{\cmix(\{A,B\})}{\cmix(\{A,C\})} &= \frac{0an+0cn+2bn+2dn}{2an+cn+2bn+dn}\\
        &= \frac{2\hat{p}}{(1-\hat{p})+\hat{p}}\\
        &= 2\hat{p} = 1.186.
    \end{aligned}
    \]
    Thus, for mechanisms with $g(\hat{p})=-\infty$, there are instances where the approximation ratio is at least $1.186$.

    Now, we will consider case (ii) $x\geq \hat{p}$. 
    Consider the instance of Example \ref{ex.strategyproof} with $a=1-\hat{p}$, $b=\hat{p}$, $c=0$, $d=0$. By Lemma \ref{lemma:sp}, since $b\leq x$, the outcome $\mathcal{M}(b)$ would be $\{A,C\}$. The approximation ratio of $\mathcal{M}$ for this instance would be 
    \[
    \begin{aligned}
        \frac{\cmix(\{A,C\})}{\cmix(\{A,B\})} &= \frac{2an+cn+2bn + dn}{2bn+2dn}\\
        &= \frac{2(1-\hat{p})+2\hat{p}}{2\hat{p}}\\
        &= \frac{1}{\hat{p}}\approx 1.686 > 1.186.
    \end{aligned}
    \]
    Thus, for mechanisms with $g(\hat{p})\geq \hat{p}$, there are instances where the approximation ratio is larger than $1.186$.

    Finally, we consider case (iii) $0 \leq x < \hat{p}$. 
    Consider the instance of Example \ref{ex.strategyproof} with $a=1-\hat{p}$, $b=x$, $c=0$, $d=\hat{p}-x$. Then since $b\leq x$, the outcome $\mathcal{M}(b)$ would be $\{A,C\}$. 
    The approximation ratio of $\mathcal{M}$ for this instance would be
    \[
    \begin{aligned}
        \frac{\cmix(\{A,C\})}{\cmix(\{A,B\})} &= \frac{2an+cn+2bn + dn}{2bn+2dn}\\
        &= \frac{2(1-\hat{p})+2x+(\hat{p}-x)}{2\hat{p}}\\
        &= \frac{2-\hat{p}+x}{2\hat{p}}\\
        &\geq \frac{2-\hat{p}}{2\hat{p}}> 1.186.
    \end{aligned}
    \]

Thus for all possible values of $g(\hat{p})$ (and thus for all deterministic strategyproof mechanisms due to Lemma \ref{lemma:sp}), there exist instances of Example \ref{ex.strategyproof} in which the mechanism returns a solution with cost that is at least 1.186 times that of optimum.
\end{proof}

\section{Unknown Agent Locations}\label{sec:agent}
In this section, we will consider the case where we do not know the agents' true locations. For $k=1$, this case reduces to the classic strategic facility location setting, where the goal is to place a single facility $a\in \F$ to minimize $\sum_{i\in \C}d(i,a)$ without knowing the true agent locations. For a general metric space, it is well known that the approximation ratio of any deterministic strategyproof mechanism is at least $\Omega(n)$ \cite{schummer2002strategy}. Even for a simple 1D metric where all agents are located on a line, all deterministic strategyproof mechanisms have an approximation ratio of at least 3.\footnote{Note that these lower bounds are for the setting with ``restricted'' facility locations that must be chosen from a set $\F$, which is the setting in the majority of our paper. Better bounds are known for the case where facilities can be placed anywhere on the line, which is the setting we investigate in Section \ref{sec:unres}.} Since the setting we study is a generalization of this, we immediately know that no better approximations are possible, even if all the agent types are public knowledge. It is not clear, however, if the upper bound of 3 still holds for the case when $k>1$ and agents have mixed types, since this is a much more general setting than for $k=1$. In this section, we focus on the setting where the metric space is assumed to be a line metric, as in much of the existing work on this topic. We show how to extend the known results for $k=1$ to the setting where $k$ can be arbitrarily large, and agents may have different types, which may be private. Doing this requires some new ideas and techniques, essentially because the 3-approximation proof from \cite{feldman2016voting} requires ``charging'' costs of some agents to costs of other agents. This is relatively simple to do when all agents have the same type, but as discussed in the proof of Lemma \ref{lem:charging}, becomes significantly more complex when attempting to charge a max agent to a sum agent
(since the maximum of $k$ distances can be much larger than the average of those distances). 

For an agent $i$, let $i$ be their true location and $x_i$ be their reported location. In the rest of this section we assume that all agents and facilities are located on a line and denote an outcome by $Y = \{y^1, y^2, \cdots, y^k\}$ where $y^1 < y^2 < \cdots< y^k$. We begin by showing a useful fact about optimal solutions of $\cmix$. Define a {\em contiguous solution} $Y=\{y^1, y^2, \cdots, y^k\}$ to be any set of $k$ facilities so that they are contiguous locations in $\F$, i.e., there is no location $z\in F$ such that $y^1 < z < y^k$ and $z\not\in Y$. 

\begin{theorem}
    On a line, there must exist an optimal solution for $\cmix$ such that all chosen locations are contiguous. 
    \label{thm:optnextto}
\end{theorem}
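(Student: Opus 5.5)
Let $Y=\{y^1<\dots<y^k\}$ be an optimal solution for $\cmix$ that is not contiguous. Among all optimal solutions, pick one minimizing the number of locations $z\in\F\setminus Y$ with $y^1<z<y^k$. I will derive a contradiction by exchanging an endpoint facility ($y^1$ or $y^k$) for such a gap location $z$ without increasing $\cmix$. Such a $z$ lies strictly between $y^1$ and $y^k$, so each candidate $Y'=Y\setminus\{y^1\}\cup\{z\}$ or $Y''=Y\setminus\{y^k\}\cup\{z\}$ has strictly fewer gap locations. If either has $\cmix$ cost at most that of $Y$, it is optimal and contradicts minimality. Repeating the argument then yields a contiguous optimal solution.

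The key step is to show that $\cmix(Y')+\cmix(Y'')\le 2\,\cmix(Y)$, so that one of the two exchanges does not increase the cost. I will verify this per agent and sum over agents.

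For a sum-type agent $i$ the verification is linear. The change is $\frac1k[d(i,z)-d(i,y^1)]+\frac1k[d(i,z)-d(i,y^k)]$. Since $y^1<z<y^k$ on a line, $2d(i,z)\le d(i,y^1)+d(i,y^k)$. The map $t\mapsto|i-t|$ is convex, so the value at $z$ is at most the value of the linear interpolation, and that interpolation is at most $\max\{d(i,y^1),d(i,y^k)\}$. This bound alone is not enough, so I will use a sharper form. Write $z=\lambda y^1+(1-\lambda)y^k$. Convexity gives $d(i,z)\le\lambda d(i,y^1)+(1-\lambda)d(i,y^k)$, which again only bounds by a weighted average. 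The symmetric sum can fail when $\lambda\ne 1/2$, so I will switch to the weighted combination $\lambda\cdot\cmix(Y'')+(1-\lambda)\cdot\cmix(Y')\le\cmix(Y)$.

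For a sum agent, the weighted change is $\frac1k[d(i,z)-(1-\lambda)d(i,y^1)-\lambda d(i,y^k)]$. The weights here are reversed relative to the convexity bound, so this is not automatically nonpositive. I therefore expect the actual choice to depend on the sign of a linear functional. On the interval $[y^1,y^k]$ the cost change is linear in the position of $z$ within each gap, and agents lying outside $[y^1,y^k]$ have a fixed side.

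The cleaner route, and the one I will commit to, is a perturbation argument. Fix the other $k-1$ facilities and view the total cost as a function $f(t)$ of the position $t$ of one endpoint facility. Sum agents contribute convex terms $\frac1k|i-t|$. For a max agent with $t=y^1$, the cost is $\max(|i-t|,|i-y^k|)$, which is convex in $t$. The relevant constraint for moving $y^1$ is that the remaining facilities $y^2,\dots,y^k$ stay fixed, and the cost is convex in $t$ for every agent. I then consider moving $y^1$ to $z$ or beyond $y^k$. The set of positions $t\in[y^1,y^k]$ (excluding the points already in $Y$) contains $z$, and convexity of $f$ on the segment from $y^1$ to $z$ together with optimality only gives $f(z)\ge f(y^1)$. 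That is the wrong direction, which is why I need both endpoints.

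The main obstacle is this two-endpoint exchange. My plan for it is to define $g(t)=\cmix(Y\setminus\{y^1,y^k\}\cup\{\text{two points}\})$ and to compare removing either endpoint, using the decomposition of each agent's cost by whether the agent lies left of, inside, or right of $[y^1,y^k]$. For an agent left of $z$, replacing $y^k$ by $z$ weakly helps both the max and the sum cost. For an agent right of $z$, replacing $y^1$ by $z$ helps. Let $L$ and $R$ be the agents left and right of $z$. If the gain on $L$ from replacing $y^k$ is at least the loss on $R$, then $Y''$ is weakly better. Otherwise I will show that the $R$-gain from replacing $y^1$ exceeds the $L$-loss, by bounding each loss by the corresponding distance ($|y^k-z|$ per unit of weight for the loss, and likewise on the other side). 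This reduces the question to comparing the weighted masses of $L$ and $R$, and whichever side is heavier determines the good swap. Handling max agents whose maximum switches between $y^1$ and $y^k$ is the delicate case. I will treat it by noting that for $i\in L$ the max is attained at $y^k$, so the gain there is exactly $y^k-\max(z,\ldots)$, which is where I expect most of the work to lie.
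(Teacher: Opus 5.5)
There is a genuine gap. Your framework is sound: take a minimal counterexample by number of gap locations, then exchange an endpoint for a gap location $z$ (the gap count does strictly drop under either exchange). The trouble is the exchange step itself. Your final paragraph leaves it unproved, and the route you sketch for it does not work.

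Your reduction to ``comparing the weighted masses of $L$ and $R$'' needs every agent in $L$ to gain at least (its weight)$\cdot d(z,y^k)$ when $y^k$ is replaced by $z$. That holds for sum agents but fails for max agents. Your claim that for $i\in L$ the maximum is attained at $y^k$ is wrong: with $Y=\{0,9.5,10\}$, $z=9$ and a max agent at $8$, the farthest facility is $y^1=0$, and replacing $y^k$ by $z$ gains this agent nothing.

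If you count such agents in the mass, the ``heavier side'' rule can pick the wrong exchange. Take many max agents at $8$ plus a few sum agents at $20$, so $L$ is heavier. Replacing $y^k$ by $z$ then strictly increases $\cmix$: the max agents stay at $8$ and the sum agents lose $1/3$ each. Replacing $y^1$ instead helps everyone.

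Separately, the inequality $2d(i,z)\le d(i,y^1)+d(i,y^k)$ in your second paragraph is false for agents outside $[y^1,y^k]$. For example, $Y=\{0,10\}$, $z=9$, $i=-100$ gives $218>210$. So $\cmix(Y')+\cmix(Y'')\le 2\cmix(Y)$ also fails in general.

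The missing idea is that max agents are never hurt by either exchange, so you should remove them from the accounting rather than balance them. Both $Y\setminus\{y^1\}\cup\{z\}$ and $Y\setminus\{y^k\}\cup\{z\}$ lie inside $[y^1,y^k]$. On a line, every point $w$ of that interval satisfies $d(i,w)\le\max\{d(i,y^1),d(i,y^k)\}=cost_i(Y)$ for a max agent $i$. Hence only sum agents matter, and the choice of exchange is a pure count:

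\begin{itemize}
\item If at least as many sum agents lie at or left of $z$ as to its right, replace $y^k$ by $z$. This lowers each agent on the left by exactly $\frac{1}{k}d(z,y^k)$ and raises each agent on the right by at most that amount.
\item Otherwise, replace $y^1$ by $z$, by the symmetric argument.
\end{itemize}

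This is the paper's Lemma~\ref{lemma:move}, and with it your minimality argument goes through.
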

To show this, we will first show a useful lemma. 
\begin{lemma} \label{lemma:move}
    Consider a solution $Y = \{y^1, y^2, \cdots, y^k\}$ such that there exists some $z \notin Y, z\in \F, y^1 < z < y^k$. Then, either $Y_k = \{y^2, \cdots, z, \cdots y^k\}$ or $Y_1 = \{y^1, \cdots, z, \cdots y^{k-1}\}$ would result in having a $\cmix$ cost no larger than $\cmix(Y)$.
\end{lemma}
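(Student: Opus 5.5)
Write $Y_k = Y\setminus\{y^1\}\cup\{z\}$ (replace the leftmost facility by $z$) and $Y_1 = Y\setminus\{y^k\}\cup\{z\}$ (replace the rightmost facility by $z$). We show that
\[
\bigl(\cmix(Y_k)-\cmix(Y)\bigr)+\bigl(\cmix(Y_1)-\cmix(Y)\bigr)\le 0 .
\]
Then at least one of the two differences is nonpositive, which is the claim. Since $\cmix$ is a sum over agents, it suffices to prove the inequality agent by agent: for every agent $i$,
\[
\Delta_i := \bigl(cost_i(Y_k)-cost_i(Y)\bigr)+\bigl(cost_i(Y_1)-cost_i(Y)\bigr)\le 0 .
\]
We check this separately for the two agent types.

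\textbf{Sum-type agents.} Here $k\,\Delta_i = 2d(i,z) - d(i,y^1) - d(i,y^k)$. On a line, $d(i,\cdot)$ is a convex function of position. Because $z$ lies strictly between $y^1$ and $y^k$, we have $d(i,z)\le \max\{d(i,y^1),d(i,y^k)\}$. That alone is not enough to get $2d(i,z)\le d(i,y^1)+d(i,y^k)$, so we argue by position:
\begin{itemize}
\item If $i\le y^1$ or $i\ge y^k$, then $d(i,\cdot)$ is linear on $[y^1,y^k]$, and $2d(i,z) \le d(i,y^1)+d(i,y^k)$ follows from $d(i,z)\le\max$ together with the fact that the two endpoint distances sum to at least twice any interior value along a monotone line (this needs a short check).
\item If $y^1\le i\le y^k$, then $d(i,y^1)+d(i,y^k)=y^k-y^1$, while $2d(i,z)$ can be as large as about $2(y^k-y^1)$. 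So the agent-by-agent inequality fails in this case.
\end{itemize}

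Because the pairwise sum can fail for individual agents, the symmetric summation idea has to be repaired, and this is the step I expect to be the main obstacle. The plan is to replace the plain sum by a weighted combination $\lambda\,\Delta^{(k)} + (1-\lambda)\,\Delta^{(1)}$ with $\lambda = \frac{y^k-z}{y^k-y^1}$, where $\Delta^{(k)} = \cmix(Y_k)-\cmix(Y)$ and $\Delta^{(1)} = \cmix(Y_1)-\cmix(Y)$. If this weighted combination is $\le 0$, then one of the two differences is $\le 0$.

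For a sum-type agent, the weighted change is $\frac{1}{k}\bigl(d(i,z) - \lambda d(i,y^1) - (1-\lambda)d(i,y^k)\bigr)$. By the choice of $\lambda$, $z=\lambda y^1+(1-\lambda)y^k$, so this quantity is $\le 0$ by convexity of $x\mapsto |i-x|$.

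For a max-type agent, let $m=\max_{y\in Y}d(i,y)$, which is attained at $y^1$ or $y^k$. The quantities involved are:
\begin{itemize}
\item $cost_i(Y_k) = \max\{d(i,y^2),\dots,d(i,y^k), d(i,z)\}\le m$;
\item $cost_i(Y_1)\le m$ similarly, using $d(i,z)\le\max\{d(i,y^1),d(i,y^k)\}$.
\end{itemize}
Hence both changes are $\le 0$, and so is any convex combination of them.

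Summing the weighted changes over all agents of both types gives $\lambda\,\Delta^{(k)}+(1-\lambda)\,\Delta^{(1)}\le 0$. Since $\lambda\in(0,1)$, one of $\cmix(Y_k)$ and $\cmix(Y_1)$ is at most $\cmix(Y)$.
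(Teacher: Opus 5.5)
Your final weighted argument is correct, but for the sum-type agents it takes a different route from the paper. The max-type part matches the paper's: every facility of $Y_1$ and $Y_k$ lies in $[y^1,y^k]$, so neither swap increases a max agent's cost.

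For sum agents, the paper uses a counting (median-type) argument. It splits $S$ into agents at or left of $z$ and agents right of $z$. If the left group is at least as large, it replaces $y^k$ by $z$: each left agent saves exactly $d(z,y^k)/k$, and each right agent loses at most $d(z,y^k)/k$ by the triangle inequality. The other case is symmetric.

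You instead average the two swaps with weight $\lambda=\frac{y^k-z}{y^k-y^1}$, chosen so that $z=\lambda y^1+(1-\lambda)y^k$. Convexity of $x\mapsto|i-x|$ then makes each sum agent's weighted change nonpositive. The two approaches trade off as follows:
\begin{itemize}
\item Yours is a per-agent inequality with weights fixed by the geometry of $y^1,z,y^k$ alone, so it needs no global count and no case split. It is, however, non-constructive: it does not say which swap works.
\item The paper's is constructive: drop the endpoint on the side of $z$ that has fewer sum agents. This fits the median structure used throughout that section.
\end{itemize}

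One cleanup: delete the exploratory unweighted attempt. Its first bullet is also false. For example, $i=y^1=0$, $z=9$, $y^k=10$ gives $2d(i,z)=18>10=d(i,y^1)+d(i,y^k)$. So the unweighted sum fails even for agents outside $[y^1,y^k]$ whenever $z$ is off-center. Nothing in your final argument depends on it.
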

\begin{proof}
   Let the set of sum type agents be $S$, and the set of max type agents be $M$. We will first consider an arbitrary $i\in M$. Since it is a max-type agent, and all facilities are on a line, then we know that $cost_i(Y)$ is either $d(i,y^1)$ or $d(i,y^k)$, i.e., $cost_i(Y)=\max\{d(i,y^1),d(i,y^k)\}$. Since $y^2$ and $y^{k-1}$ are located in between $y^1$ and $y^k$, we also know that $d(i,y^2)\leq\max\{d(i,y^1),d(i,y^k)\}$ and $d(i,y^{k-1})\leq\max\{d(i,y^1),d(i,y^k)\}$. Thus, we have that $cost_i(Y_1)=\max\{d(i,y^1),d(i,y^{k-1})\}\leq \max\{d(i,y^1),d(i,y^k)\} =cost_i(Y)$, and similarly for $cost_i(Y_k)$. 
   
   Therefore, we can conclude that 
   $$\sum_{i\in M} cost_i(Y_1) \leq \sum_{i\in M} cost_i(Y)$$
   $$\sum_{i\in M} cost_i(Y_k) \leq \sum_{i\in M} cost_i(Y)$$

   Next, we will consider $i\in S$. Let the set of agents $i \in S$ such that $i \leq z $ be $L$ and the set of agents $i \in S$ such that $i > z $ be $R$. We will consider two cases: $|L| \geq |R|$ and $|L| \leq |R|$. First, we consider the case where $|L| \geq |R|$. We note that 
   \[
   \begin{aligned}
       \sum_{i \in L} cost_i(Y_k) - \sum_{i \in L} cost_i(Y) &= \frac{1}{k}\sum_{i \in L} \left(d(i, z)-d(i,y^k)\right)\\
       &= -\frac{|L|}{k}d(z,y^k)\\
       \sum_{i \in R} cost_i(Y_k) - \sum_{i \in R} cost_i(Y) &= \frac{1}{k}\sum_{i \in R} \left(d(i, z )-d(i,y^k)\right)\\
       &\leq \frac{|R|}{k}d(z,y^k)\\
   \end{aligned}
   \]
   Recall that $|L| \geq |R|$, so we have 
   \[
   \begin{aligned}
       \sum_{i \in S} cost_i(Y_k) - \sum_{i \in S} cost_i(Y) &= \sum_{i \in L} cost_i(Y_k) - \sum_{i \in L} cost_i(Y) + \sum_{i \in R} cost_i(Y_k) - \sum_{i \in R} cost_i(Y)\\
       &\leq -|L|d(z,y^k) + |R|d(z,y^k)\\
       &\leq -|L|d(z,y^k) + |L|d(z,y^k)\\
       &= 0
   \end{aligned}
   \]
   This means that $\sum_{i\in S} cost_i(Y_k) \leq \sum_{i\in S} cost_i(Y)$. Combined with $\sum_{i\in M} cost_i(Y_k) \leq \sum_{i\in M} cost_i(Y)$, this means that $\cmix(Y_k)\leq \cmix(Y)$. Now, for the second case, $|L| \leq |R|$, using a symmetric argument, we have that $\cmix(Y_1)\leq \cmix(Y)$. Therefore, we can conclude that either $Y_1$ or $Y_k$ would result in having a $\cmix$ cost no larger than $\cmix(Y)$.
\end{proof}

Using this lemma, we can now prove Theorem \ref{thm:optnextto}.

\begin{proof}[Proof for Theorem \ref{thm:optnextto}]
    Let an optimal solution be $O = \{o^1, o^2, \cdots, o^k\}$. If all the $k$ locations are contiguous, then we are done. Therefore, we assume otherwise: there must exist some $z\in \F$ such that $z \notin O, o^1 < z < o^k$. Then, by Lemma \ref{lemma:move}, we can either move $o^1$ or $o^k$ to $z$ so that the resulting solution $O'$ would either (1) have $\cmix(O') < \cmix(O)$, a contradiction, or (2) $\cmix(O') = \cmix(O)$. In the second case, we can then repeat the same procedure on $O'$ until the resulting solution is contiguous. 
\end{proof}

Now we are going to introduce a strategyproof mechanism for the case where we do not know the true locations of the clients. This mechanism is a simple generalization of the median mechanism \cite{freeman2021truthful, moulin1980strategy, BARBERA2011731, procaccia2013approximate} for $k=1$, which chooses the facility closest to the median agent, and is known to be strategyproof. It has also been shown to have an approximation factor of at most 3 for 1D metrics (see \cite{feldman2016voting}). It is not difficult to see that our mechanism is still strategyproof for $k>1$ as well, but we include the argument below for completeness (Theorem \ref{thm:medsp}).  Showing that this mechanism still has an approximation factor of at most 3, however, and that this holds even for agents of mixed types, is not so easy, and requires new techniques as compared with the $k=1$ case. Most of this section is devoted to proving this (Theorem \ref{thm:types3}).

\begin{mechanism}
	Given the reported agent locations $(\vec{x}_i)$ on a line, mechanism $Med^*$ returns $k$ closest  facilities from the median agent location. More formally, we have:
    \[
    \begin{aligned}
        Med^*(\vec{x}) = \text{argmin}_{A\subseteq \F, |A| = k}\sum_{a\in A} d(a, \text{median}(\vec{x}))
    \end{aligned}
    \]
    Note that when the number of agents is even, $\text{median}(\vec{x})$ would take the average of its two medians. 
\label{mech:med}
\end{mechanism}

Before proceeding to the main results for this section, we first prove the following useful lemmas about agent costs in contiguous solutions.

\begin{lemma}
    Let $Y$ and $Z$ be arbitrary contiguous solutions of size $k$, with $|Y\cap Z|=k-p$ for some $p\leq k$. Let $A=Y\setminus Z=\{a^1,a^2,\ldots, a^p\}$, $B=Z\setminus Y=\{b^1,b^2,\ldots, b^p\}$, and $C=Y\cap Z = \{c^1,c^2,\ldots, c^{k-p}\}$. Then, it must be either that $$a^1< a^2<\ldots< a^p< c^1< c^2<\ldots< c^{k-p}< b^1< b^2<\ldots< b^p$$ or
    $$b^1< b^2<\ldots< b^p< c^1< c^2<\ldots< c^{k-p}< a^1< a^2<\ldots< a^p.$$ \label{lem:1contiguous}
\end{lemma}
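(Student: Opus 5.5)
A contiguous solution of size $k$ is exactly a block of $k$ consecutive elements of $\F$ in sorted order. So I would work with index intervals. List the elements of $\F$ as $f_1<f_2<\cdots<f_m$; this is valid because $\F$ is a finite set, and the multiset case is handled the same way once ties are broken by a fixed index order. By the definition of a contiguous solution, $Y=\{f_s,\dots,f_{s+k-1}\}$ and $Z=\{f_t,\dots,f_{t+k-1}\}$ for some indices $s,t$. Indeed, if $Y$ skipped an index between its smallest and largest element, that skipped $z\in\F$ would witness non-contiguity. By the symmetry between $Y$ and $Z$ (which swaps the roles of $A$ and $B$ and the two displayed orderings), I will assume without loss of generality that $s\le t$.

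If $p=0$, then $A=B=\emptyset$ and both orderings reduce to the sorted list of $C$, so there is nothing to prove. Suppose $p\ge 1$. The intersection of two index intervals is again an index interval, here $\{f_t,\dots,f_{s+k-1}\}$ when $t\le s+k-1$, and it is empty otherwise. Its size is $k-p$, so $t-s=p$ whenever $p<k$; when $p=k$ we only know $t-s\ge k$. The remaining sets are then explicit:
\begin{itemize}
\item $A=Y\setminus Z=\{f_s,\dots,f_{t-1}\}$ when $p<k$, and $A=Y$ when $p=k$;
\item $B=Z\setminus Y=\{f_{s+k},\dots,f_{t+k-1}\}$ when $p<k$, and $B=Z$ when $p=k$.
\end{itemize}
In every case the indices of $A$ are all strictly smaller than those of $C$, which in turn are strictly smaller than those of $B$. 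When $C$ is empty, the indices of $A$ lie in $[s,s+k-1]$ and those of $B$ lie in $[t,t+k-1]$ with $t\ge s+k$, so $A$ still precedes $B$. Since indices are ordered the same way as locations, this yields the first displayed chain. Labelling each set in increasing order gives exactly the stated inequalities. The case $t<s$ gives the second chain by the symmetric argument.

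The only mildly delicate step is the bookkeeping when $p=k$, where $C$ is empty and the intervals may be separated by a gap. I handle it with the explicit index bounds above rather than through the formula $t-s=p$. Strictness of the inequalities comes from the elements of $\F$ being distinct. If $\F$ is a multiset, I would interpret "$<$" as the fixed tie-broken index order.
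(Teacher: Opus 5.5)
Your proof is correct, but it is organized differently from the paper's.

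The paper never enumerates $\F$. It assumes $a^1<b^1$ and derives each needed inequality by contradiction: first $a^p<b^p$, then $a^p<c^1$, then $c^{k-p}<b^1$, and $a^p<b^1$ when $C=\emptyset$. Each step exhibits an element of $B$ (resp. $A$) lying strictly between two elements of $Y$ (resp. $Z$).

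You instead first show that a contiguous solution is exactly a block $\{f_s,\dots,f_{s+k-1}\}$ of consecutive elements of the sorted $\F$. You then read off $A$, $B$, $C$ explicitly from the two index intervals, using $t-s=p$ when $p<k$ and $t-s\ge k$ when $p=k$.

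Your route gives a sharper picture: it describes all three sets explicitly, and it handles the degenerate cases $p=0$ and $C=\emptyset$ by index bounds rather than separate remarks. The cost is that it relies on an enumeration $f_1<\dots<f_m$, hence on $\F$ being finite. The paper does assume this for exposition, so that is fine here. The paper's purely order-theoretic argument, by contrast, works verbatim for any $\F$ on the line, including the infinite case its footnote mentions.

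One small caveat concerns your closing aside on multisets. With the paper's location-based definition of contiguity, breaking ties by a fixed index order does not turn contiguous solutions into index intervals. In fact the lemma with strict inequalities fails there. Take $\F=\{0,1,1',2\}$ with $1$ and $1'$ both at location $1$, $k=2$, $Y=\{0,1\}$ and $Z=\{0,1'\}$. Both are contiguous, but $c^1=0$ lies to the left of both $a^1$ and $b^1$, so neither chain holds. This does not affect your proof when $\F$ is a set, which is the setting of the statement; you should simply drop the multiset claim.
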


\begin{proof}
    Without loss of generality, suppose that $a^1< b^1$. First, note that $a^p < b^p$. If that was not the case, then $a^1<b^1<b^p<a^p$, which is a contradiction to $Y$ being contiguous, since $b^1$ and $b^p$ are not in $Y$, but are located between two locations in $Y$. Thus, we have that $a^1 < b^1$ and $a^p < b^p$. 

    If $c^1< a^p$, that would mean that solution $Z$ has a location $a^p\not\in Z$ which is strictly between the first and last locations of $Z$, i.e., $c^1 < a^p < b^p$. Thus, we also know that $a^p < c^1$. Similarly, we have that $c^{k-p}<b^1$, since otherwise $b^1$ being between $a^1$ and $c^{k-p}$ contradicts the contiguity of $Y$. Thus, we have shown that $a^1 < a^p < c^1 < c^{k-p} < b^1 < b^p$, as desired. In the special case when $|C|=0$, we also see that $a^p < b^1$ by the same argument.
    %
\end{proof}

\begin{lemma}
For some reported locations $\vec{x}$, consider the contiguous solution $Med^*(\vec{x})$ and another arbitrary contiguous solution $Z\neq Med^*(\vec{x})$. Let $|Med^*(\vec{x})\cap Z|=k-p$ for any $p \leq k$, and let $A=Med^*(\vec{x})\setminus Z = \{a^1,a^2,\ldots, a^p\}$ and $B=Z\setminus Med^*(\vec{x}) = \{b^1,b^2,\ldots, b^p\}$. Without loss of generality, suppose that $a^1< a^2<\ldots< a^p< b^1< b^2<\ldots< b^p$. Then for any agent $i$ located to the left of $median(\vec{x})$ (i.e., $i\leq median(\vec{x})$), it must hold that $d(i,a^j)\leq d(i,b^j)$ for all $j=1..p$.
    \label{lem:2contiguous}
\end{lemma}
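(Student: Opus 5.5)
Let $m=\text{median}(\vec{x})$ and $Y=Med^*(\vec{x})$. The plan is to reduce the claim to a single pairwise comparison at the median, using the optimality of $Y$ for the objective $\sum_{a\in Y} d(a,m)$, and then transfer it to any agent $i\le m$ by the geometry of the line. First I note that $Y$ is indeed contiguous: it consists of the $k$ facilities closest to $m$, so any $z\in\F$ strictly between two chosen facilities is at least as close to $m$ as one of them, and ties can be broken to keep the set contiguous. Lemma \ref{lem:1contiguous} then gives the ordering assumed in the statement, with the common part $C$ lying between $A$ and $B$.

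\textbf{Key step: $d(m,a^j)\le d(m,b^j)$ for every $j$.} Suppose instead that $d(m,a^j)>d(m,b^j)$ for some $j$. I would swap $a^j$ for $b^j$, forming $Y'=(Y\setminus\{a^j\})\cup\{b^j\}$. This set has $k$ facilities and strictly smaller total distance to $m$, contradicting the choice of $Y$. (If there are ties, I fix a tie-breaking rule for which $Y$ is minimal; then this gives the non-strict inequality.) Note that $Y'$ need not be contiguous, but that does not matter, since $Med^*$ minimizes over all $k$-subsets.

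\textbf{Transfer to agent $i\le m$.} Since $a^j<b^j$, there are three cases.
\begin{itemize}
\item If $b^j\le i$, then $a^j<b^j\le i\le m$, so $d(m,a^j)>d(m,b^j)$, which contradicts the key step. This case cannot occur.
\item If $a^j\le i\le b^j$, then $d(i,b^j)=b^j-i\ge b^j-m$, because $i\le m$. Also $d(i,a^j)=i-a^j\le m-a^j$. Here $a^j\le m$ is forced: otherwise $m<a^j<b^j$ would contradict the key step. Combining with the key step gives $d(i,a^j)\le m-a^j=d(m,a^j)\le d(m,b^j)=b^j-m\le d(i,b^j)$, where $d(m,b^j)=b^j-m$ needs $b^j\ge m$, which holds because $b^j\ge i$ and… this needs care.
\item If $i\le a^j$, then $d(i,a^j)=a^j-i<b^j-i=d(i,b^j)$ trivially.
\end{itemize}

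\textbf{Main obstacle.} The delicate part is the middle case, where I must confirm that $m$ lies between $a^j$ and $b^j$, or handle it otherwise. If $b^j<m$, then $a^j<b^j<m$ again contradicts the key step. So $b^j\ge m$, and the chain of inequalities goes through. Apart from this, the only other care needed is tie-breaking in the definition of $Med^*$ (and the even-$n$ averaged median, which plays no special role since only $i\le m$ is used).
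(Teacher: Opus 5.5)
Your proposal is correct and follows essentially the same route as the paper. The key fact $d(m,a^j)\le d(m,b^j)$ comes from $a^j$ being among the $k$ locations closest to the median while $b^j$ is not; this is your swap argument. From there the case $i\le a^j$ is trivial, and when $a^j\le i$ one shows $m\le b^j$ and chains $d(i,a^j)\le d(m,a^j)\le d(m,b^j)\le d(i,b^j)$.

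There is one small slip in your middle case. There $a^j\le m$ follows immediately from $a^j\le i\le m$. The justification you gave is wrong, since $m<a^j<b^j$ is perfectly consistent with the key step.
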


\begin{proof}
Since both of the solutions are contiguous, and we are not considering their intersection, then by Lemma \ref{lem:1contiguous} it must be that $a^1< a^2<\ldots a^p< b^1< b^2< \ldots< b^p$. Consider $d(i,a^j)$ for any $j\leq p$. We compare the distance $d(i,a^j)$ with the distance $d(i,b^j)$. If $i$ is to the left of $a^j$, then $i\leq a^j < b^{j}$, so clearly $d(i,a^j) < d(i,b^j)$. If instead $a^j\leq i$, then $a^j\leq i\leq median(\vec{x})\leq b^{j}$. To see why $median(\vec{x})\leq b^{j}$, note that $a^j$ is included in the closest $k$ locations to $median(\vec{x})$, and $b^{j}$ is not included in these locations, so $median(\vec{x})$ cannot be to the right of $b^{j}$ and thus farther from $a^j$ than  from $b^{j}$. Since $a^j\leq i\leq median(\vec{x})\leq b^{j}$, then we have that $d(i,a^j)\leq d(median(\vec{x}),a^j)\leq d(median(\vec{x}),b^{j})\leq d(i,b^{j})$. The reason why $d(median(\vec{x}),a^j)\leq d(median(\vec{x}),b^{j})$ is once again because $a^j$ has to be closer to $median(\vec{x})$ than $b^{j}$ to be included in the set $Med^*(\vec{x})$ instead of $b^{j}$. Thus we have shown that for any $j\leq p$, we have $d(i,a^j)\leq d(i,b^{j})$. 
\end{proof}

\begin{lemma}
Consider the same setting as in Lemma \ref{lem:2contiguous}. Then, for any agent $i$ located to the left of $median(\vec{x})$ (i.e., $i\leq median(\vec{x})$), it must hold that $cost_i(Med^*(\vec{x}))\leq cost_i(Z)$. This holds for both max and sum type agents.
    \label{lem:3contiguous}
\end{lemma}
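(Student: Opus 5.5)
Write $Y=Med^*(\vec{x})$ and decompose both solutions as $Y=A\cup C$ and $Z=B\cup C$, with $C=Y\cap Z$. Lemma \ref{lem:2contiguous} gives $d(i,a^j)\leq d(i,b^j)$ for every $j\leq p$, for any agent $i\leq median(\vec{x})$, and this pairing is the only ingredient we need. The plan is to handle sum-type and max-type agents separately, using this one-to-one matching between $A$ and $B$ while the common part $C$ is shared.

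For a sum-type agent, the cost is additive over facilities. We write
\[
k\cdot cost_i(Y)=\sum_{c\in C}d(i,c)+\sum_{j=1}^p d(i,a^j)\leq \sum_{c\in C}d(i,c)+\sum_{j=1}^p d(i,b^j)=k\cdot cost_i(Z),
\]
and the inequality is a termwise application of Lemma \ref{lem:2contiguous}.

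For a max-type agent, we first show that the facility of $Y$ achieving the maximum distance from $i$ is matched by an at least as distant facility of $Z$.
\begin{itemize}
\item If the maximum over $Y$ is attained at some $c\in C$, then $c\in Z$, so $cost_i(Z)\geq d(i,c)=cost_i(Y)$.
\item If the maximum is attained at some $a^j\in A$, then $cost_i(Y)=d(i,a^j)\leq d(i,b^j)\leq cost_i(Z)$, since $b^j\in Z$.
\end{itemize}
In either case $\max_{y\in Y}d(i,y)\leq \max_{z\in Z}d(i,z)$.

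The case $p=0$ is excluded because $Z\neq Y$. The only real subtlety is that Lemma \ref{lem:2contiguous} assumes the ordering in which $A$ lies to the left of $B$. That lemma fixes this orientation without loss of generality, and its conclusion applies to agents at or left of the median. So the statement here must be read under the same orientation, with the symmetric statement holding for agents to the right when $B$ lies left of $A$. Once that convention is fixed, the argument is immediate and there is no real obstacle beyond invoking Lemma \ref{lem:2contiguous} correctly.
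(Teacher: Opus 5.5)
Your proof is correct and follows the paper's argument essentially verbatim. You use the termwise comparison from Lemma \ref{lem:2contiguous} for sum agents, and for max agents the two-case split according to whether the farthest facility of $Med^*(\vec{x})$ lies in the intersection or in $A$; your remark that the statement inherits the orientation $a^p<b^1$ fixed in Lemma \ref{lem:2contiguous} is accurate and is how the paper intends it to be read.
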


\begin{proof}
Let $Y=Med^*(\vec{x})$. Let $i$ be a max agent, and suppose $cost_i(Y)=d(i,y)$ for some $y\in Y$. If $y\in Y\cap Z$, then $y$ is contained in $Z$ as well, so $cost_i(Z)=\max_{z\in Z}d(i,z)\geq d(i,y)$, so $cost_i(Y)\leq cost_i(Z)$. If instead $y\in A=Y\setminus Z$, it must equal some $a^j$, and by Lemma \ref{lem:2contiguous} we have that $d(i,a^j)\leq d(i,b^j)\leq cost_i(Z)$, since $b^j\in B\subseteq Z$. So $cost_i(Y)\leq cost_i(Z)$. 

Similarly, let $i$ be a sum agent, so $cost_i(Y)=d(i,Y)$. By Lemma \ref{lem:2contiguous}, we know that for each $a^j\in A$, we have $d(i,a^j)\leq d(i,b^{j})$. Thus in total, we have that 
$$\sum_{y\in Y} d(i,y)=  \sum_{j=1}^{p}d(i,a^j) +\sum_{y\in Z\cap Y} d(i,y)\leq \sum_{j=1}^p d(i,b^j) +\sum_{y\in Z\cap Y} d(i,y) = \sum_{z\in Z} d(i,z),$$
and thus $cost_i(Y)\leq cost_i(Z)$ for all sum agents as well. 
\end{proof}

Using the above lemmas which establish the main properties of contiguous solutions, it is now simple to prove the following theorem. 

\begin{theorem}\label{thm:medsp}
    $Med^*$ is strategyproof. 
\end{theorem}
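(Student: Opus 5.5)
Fix an agent $i$ with true location $i$ (and true type, max or sum), and fix the reports $\vec{x}_{-i}$ of all other agents. Let $Y=Med^*(\vec{x})$ be the outcome when $i$ reports truthfully ($x_i=i$), and let $Z=Med^*(x_i',\vec{x}_{-i})$ be the outcome under an arbitrary misreport $x_i'$. Both $Y$ and $Z$ are sets of the $k$ facilities closest to a point, and on a line such a set is contiguous in $\F$. So both outcomes are contiguous solutions, and Lemmas \ref{lem:1contiguous}--\ref{lem:3contiguous} apply to the pair $(Y,Z)$. If $Z=Y$ there is nothing to show, so assume $Z\neq Y$. Also, if $i$ misreports its type, the mechanism ignores types, so only location misreports matter. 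This also covers the case when both type and location are private.

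The key step is the classic median argument. Let $m=median(\vec{x})$ with truthful report and $m'$ the median after the misreport. By symmetry, suppose $i\le m$. If $x_i'\le m$ as well, the multiset of reports on each side of the median is unchanged in the relevant sense, and $m'=m$. If $x_i'>m$, then $i$ has moved a point from the left of the median to the right, so $m'\ge m$. I will check the even-$n$ averaging convention here: both middle order statistics weakly increase, so their average does too. Hence any misreport by $i$ can only keep the median fixed (giving $Z=Y$) or push it weakly to the right, away from $i$.

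It remains to show that when $m'\ge m\ge i$, agent $i$ weakly prefers $Y$ to $Z$. Lemma \ref{lem:3contiguous} does exactly this, with $Y=Med^*(\vec{x})$ and $i\le median(\vec{x})$: it gives $cost_i(Y)\le cost_i(Z)$ for both max and sum agents. The only thing to verify is the orientation hypothesis in Lemma \ref{lem:2contiguous}, namely that $A=Y\setminus Z$ lies to the left of $B=Z\setminus Y$. By Lemma \ref{lem:1contiguous}, one of the two orderings holds. If instead $B$ were to the left, then some $b^j<a^j$ with $b^j$ among the $k$ closest facilities to $m'\ge m$ and $a^j$ not. Meanwhile $a^j$ is among the $k$ closest to $m$ and $b^j$ is not. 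So $d(m,a^j)\le d(m,b^j)$ and $d(m',b^j)\le d(m',a^j)$, which together force ties when the point moves rightward past $b^j<a^j$. This orientation/tie-breaking check is the main obstacle. I would handle it by fixing a consistent tie-breaking rule in the argmin (for example, prefer leftmost facilities), which makes the chosen window monotone in the median point. Then $Z$'s window is weakly to the right of $Y$'s, giving the required ordering $a^1<\dots<a^p<b^1<\dots<b^p$. Then Lemma \ref{lem:3contiguous} yields $cost_i(Y)\le cost_i(Z)$. The case $i\ge m$ is symmetric, so no agent benefits from misreporting.
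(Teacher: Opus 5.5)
Your route is the same as the paper's: a misreport can only leave the median fixed or push it away from $i$, and then Lemma~\ref{lem:3contiguous} finishes. Your extra checks are real improvements: you verify that $Med^*$ always returns a contiguous window, and that $Y\setminus Z$ lies to the left of $Z\setminus Y$, which the paper glosses with a ``without loss of generality'' in Lemma~\ref{lem:2contiguous}.

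\textbf{The gap.} The median step fails as written. Your claim that $i\le m$ and $x_i'\le m$ give $m'=m$ is false under the stated averaging convention for even $n$. If $i$ is the lower of the two middle reports, reporting further left lowers that order statistic and drags the average toward $i$. Your check of the even case only covers $x_i'>m$, where all order statistics indeed weakly rise; the dangerous case is $x_i'<i$. Concretely, take $n=2$ agents at $0$ and $4$, with $k=2$ and $\F=\{0,1,3,4\}$:
\begin{itemize}
\item Truthfully, $m=2$ and $Med^*=\{1,3\}$, so the agent at $0$ pays $3$ (max type) or $2$ (sum type).
\item Reporting $x_i'=-4$ gives $m'=0$ and $Med^*=\{0,1\}$, costing $1$ (max) or $1/2$ (sum).
\end{itemize}
So with averaging the statement is false, and no argument can close this step. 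The paper's own proof makes the same unjustified assertion $median(\vec{x}_{-i},x_i')=median(\vec{x})$.

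\textbf{The repair.} Define $median(\vec{x})$ as a fixed order statistic, say the $\lceil n/2\rceil$-th smallest report. Then for $i<m$, every report $x_i'\le m$ leaves $m$ unchanged, and every report $x_i'>i$ weakly raises it. With this convention your argument goes through.

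Two smaller points. First, even with a single order statistic, $m'=m$ can fail when $i=m$: with reports $0,1,2$, the agent at $1$ reporting $0$ moves the median to $0$. This is harmless, since then $m'\le m=i$ and the mirror image of Lemma~\ref{lem:3contiguous} applies. Still, your case split should handle $i=m$ according to the direction of the misreport, not assert $m'=m$. Second, the orientation step needs no special tie-breaking rule. Suppose $m<m'$ and $Z\setminus Y$ lay to the left of $Y\setminus Z$. Then for $b^j<a^j$, your two inequalities give $m'\le (a^j+b^j)/2\le m$, an outright contradiction. Ties can only occur when $m'=m$, and then $Z=Y$ for any tie-breaking rule that depends only on the median point.
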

\begin{proof}
Suppose that the reported locations are $\vec{x}$, and consider any agent $i$. Suppose $x_i=i$, i.e., $i$ reports its true location. Without loss of generality, assume that $i$ is to the left of (or equal to) $median(\vec{x})$. 

If $i$ misreports its position to be any $x_i'$ which is at most $median(\vec{x})$, then the result of the mechanism does not change, i.e., $Med^*(\vec{x}_{-i}, x_i')= Med^*(\vec{x})$, since $median(\vec{x}_{-i}, x_i')=median(\vec{x})$. Thus, $i$ does not improve its cost by reporting $x_i'$ instead of reporting truthfully. 

Now suppose that instead $i$ misreports its position to be $x_i'$ to the right of $median(\vec{x})$, so that the median agent location actually changes. Let $Y=Med^*(\vec{x})$ and $Z=Med^*(\vec{x}_{-i}, x_i')$.  If $Z=Y$ then once again $i$ does not improve its cost by reporting $x_i'$, so we can assume $Z\neq Y$ and $median(\vec{x})<median(\vec{x}_{-i}, x_i')$. We now apply Lemma \ref{lem:3contiguous} to $Med^*(\vec{x})=Y$ and $Z$. The lemma tells us that, 
for  both max and sum agents, the cost of agent $i$ obtained by mis-reporting is no better than the cost it obtains by telling the truth, i.e., $cost_i(Y)\leq cost_i(Z)$.

Since both max agents and sum agents have no incentive to misreport their locations, we conclude that $Med^*$ is strategyproof.
\end{proof}

Next, we will prove an upper bound on the approximation ratio of $Med^*$.
\begin{theorem}
    When agents report truthfully, $Med^*$ always selects an outcome which is at most a 3-approximation for \cmix. 
    \label{thm:types3}
\end{theorem}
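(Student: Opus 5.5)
Let $Y=Med^*(\vec{x})$ with $\vec{x}$ the true locations, $m=median(\vec{x})$, and let $O$ be an optimal \cmix{} solution. By Theorem \ref{thm:optnextto} we take $O$ contiguous, and $Y$ is contiguous by construction. If $O=Y$ we are done. Otherwise Lemma \ref{lem:1contiguous} applies with $A=Y\setminus O=\{a^1<\dots<a^p\}$, $B=O\setminus Y=\{b^1<\dots<b^p\}$ and $C=Y\cap O$. By symmetry assume $a^1<\dots<a^p<C<b^1<\dots<b^p$, i.e.\ $O$ lies ``to the right'' of $Y$. Let $L=\{i: i\le m\}$ and $R=\{i:i>m\}$. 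With the median convention, $|L|\ge n/2\ge |R|$. By Lemma \ref{lem:3contiguous}, every agent in $L$, of either type, satisfies $cost_i(Y)\le cost_i(O)$. It therefore suffices to show
\[
\sum_{i\in R}cost_i(Y)\;\le\;\sum_{i\in R}cost_i(O)+2\sum_{i\in L}cost_i(O),
\]
which combined with the bound for $L$ gives $\cmix(Y)\le 3\,\cmix(O)$.

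The core is a per-facility charging argument, as in the $k=1$ proof of \cite{feldman2016voting}. For each $j$, we pair $a^j$ with $b^j$. Since $a^j$ is among the $k$ closest facilities to $m$ and $b^j$ is not, we have $d(m,a^j)\le d(m,b^j)$. Hence for any agent $i$,
\[
d(i,a^j)\le d(i,m)+d(m,a^j)\le d(i,m)+d(m,b^j)\le 2d(i,m)+d(i,b^j).
\]
For $i\in R$ and any $\ell\in L$, we have $d(i,m)\le d(i,\ell)$, and more usefully $d(i,m)\le$ something chargeable to $L$. Because $|L|\ge|R|$, we fix an injection $\phi:R\to L$. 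For $i\in R$ and $\ell=\phi(i)$, every point of $A\cup C$ lies left of $b^1$. We aim for a bound of the form $d(i,y)-d(i,o)\le 2d(\ell,o')$ for suitable $o,o'\in O$, using the fact that $\ell\le m\le b^1$: $\ell$ must reach every $b^j$ and lies on the far side of $m$ from $i$. For sum agents in $R$, summing the pairwise inequality $d(i,a^j)\le d(i,b^j)+2\min(d(i,m),d(m,b^j))$ over $j$ gives $cost_i(Y)\le cost_i(O)+\frac{2}{k}\sum_j d(m,b^j)$. Since $\ell\le m\le b^j$, we have $d(m,b^j)\le d(\ell,b^j)$, so the error is at most $2\,d(\ell,O)\le 2\,cost_\ell(O)$ whether $\ell$ is a sum or a max agent. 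This case is clean.

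The main obstacle is a max agent $i\in R$ charged to a \emph{sum} agent $\ell\in L$. Here $cost_i(Y)=\max(d(i,a^1),d(i,\max Y))$, and the error term is $2d(m,b^j)$ for a single $j$ (essentially $j=1$, the worst far-left facility). That error is not bounded by the average $d(\ell,O)$; this is the asymmetry highlighted before Lemma \ref{lem:charging}. I would handle it using contiguity. If the max of $i$ over $Y$ is attained at $a^1$, then $d(i,a^1)\le d(i,m)+d(m,a^1)$. The key observation is that $d(m,a^1)\le d(m,b^p)$, which pairs $a^1$ with the \emph{farthest} $b^p$ rather than $b^1$. This is valid since all $a$'s are closer to $m$ than all $b$'s. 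Also $d(i,m)\le d(i,b^p)$ when $i\le b^p$, and the case $i>b^p$ is handled separately. This yields $cost_i(Y)\le cost_i(O)+2d(m,b^p)$-type bounds. I would then try to dominate $d(m,b^p)$ by an average: the $b$'s together with $C$ form a contiguous block right of $m$, so $d(\ell, b^p)$ might be controlled by $k\,d(\ell,O)$, which is too weak. Instead I would charge max agents in $R$ preferentially to max agents in $L$, and otherwise exploit $d(i,m)$ slack, using $d(m,a^1)\le d(m,\text{leftmost of }O)$ via a finer matching. Establishing this charging lemma (presumably Lemma \ref{lem:charging}) is where I expect most of the work. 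The remaining cases (max of $i$ at a point of $C$, or $i$ beyond $b^p$) should follow directly since then $cost_i(Y)\le cost_i(O)$.
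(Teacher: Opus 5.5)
\textbf{There is a genuine gap, and it is in your reduction, not only in the unfinished mixed case.} The inequality you say suffices, $\sum_{i\in R}cost_i(Y)\le\sum_{i\in R}cost_i(O)+2\sum_{i\in L}cost_i(O)$, is false.

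Here is a counterexample. Put the median at $0$, $k-1$ facilities at (or within $\epsilon$ of) $0$, and one facility each at $-1+\delta$ and at $1$. Place $r+1$ sum agents at $0$ and $r$ max agents at $1/2$. Then $Y=C\cup\{-1+\delta\}$, the optimum is $O=C\cup\{1\}$, $L$ is exactly the sum agents, and $R$ is exactly the max agents. Each max agent has $cost_i(Y)\approx 3/2=3\,cost_i(O)$. Each sum agent has $cost_\ell(O)=1/k$. So your left side is about $3r/2$ and your right side about $r/2+2(r+1)/k$, which is already smaller for $k=r=3$.

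The theorem itself survives this example: the ratio stays below $3$ and tends to $3$ as $k\to\infty$. But no choice of injection rescues your inequality. The fallback of ``charging max agents preferentially to max agents in $L$'' is unavailable, since $L$ contains none. What is true, and what the paper proves as Lemma~\ref{lem:charging}, is the pairwise bound $cost_i(Y)+cost_\ell(Y)\le 3[cost_i(O)+cost_\ell(O)]$. In the case of a max agent $i\in R$ paired with a sum agent $\ell\in L$, you must allow $cost_i(Y)\le 3\,cost_i(O)+2\,cost_\ell(O)$. That is, the coefficient on the $R$-agent's own optimal cost must be $3$, not $1$.

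\textbf{The missing idea.} With that corrected target the case closes quickly. Route through $\ell$ rather than through the median, and charge $\ell$ only for its distance to the \emph{nearest} facility of $O$.
If $i$'s farthest facility is $a^1$, Lemma~\ref{lem:2contiguous} applied to $\ell\in L$ gives $d(i,a^1)\le d(i,\ell)+d(\ell,a^1)\le d(i,\ell)+d(\ell,b^1)\le 2d(i,\ell)+cost_i(O)$.
Now let $o^1$ be the leftmost facility of $O$. If $o^1\le\ell$, then $o^1\le\ell<i$, so $d(i,\ell)\le d(i,o^1)\le cost_i(O)$. If $\ell<o^1$, then all of $O$ lies to the right of $\ell$, so $d(\ell,o^1)=\min_{o\in O}d(\ell,o)\le d(\ell,O)$, since a minimum is at most an average. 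Either way $d(i,\ell)\le cost_\ell(O)+cost_i(O)$. Adding $cost_\ell(Y)\le cost_\ell(O)$ from Lemma~\ref{lem:3contiguous} finishes the case.

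\textbf{Two incorrect auxiliary claims.} First, $d(i,m)\le d(i,b^p)$ fails for $i$ just left of $b^p$. Second, $d(m,a^1)\le d(m,o^1)$ fails whenever $C\neq\emptyset$, because then $o^1=c^1\in Y$; in the example above these distances are about $1$ versus $0$.

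Your treatment of sum agents in $R$, and of max agents in $R$ whose farthest facility is the right end of $Y$, is fine.
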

\begin{proof}
    Let the reported locations of the agents be $\vec{x}$. By Theorem \ref{thm:medsp}, $Med^*$ is strategyproof, meaning that we can assume $\vec{x}$ are the true locations of the agents. Now, let $Y = Med^*(\vec{x})$, and $O$ be an optimal solution such that all the $k$ facility locations in $O$ are contiguous. Note that such an optimal solution must exist by Theorem \ref{thm:optnextto}. Assume $Y \neq O$, otherwise we are done. 
    
    Suppose that $|Y\cap O|=k-p$ for some $p\leq k$. Note that $p\geq 1$ since $Y\neq O$. Let $A=Y\setminus O=\{a^1,a^2,\ldots, a^p\}$, $B=O\setminus Y=\{b^1,b^2,\ldots, b^p\}$, and $C=Y\cap O = \{c^1,c^2,\ldots, c^{k-p}\}$. Without loss of generality, assume that $a^1<b^1$, then by Lemma \ref{lem:1contiguous} we know that 
    $$a^1< a^2<\ldots< a^p< c^1< c^2<\ldots< c^{k-p}< b^1< b^2<\ldots< b^p.$$

    Let $L$ be the set of agents $i$ such that $i\leq median(\vec{x})$, and $R$ be the set of agents to the right of the median. By definition of median, we know that $|L|\geq |R|$. We now want to prove that $\sum_{i\in\C}cost_i(Y)\leq 3\sum_{i\in \C}cost_i(O)$. 

    For any agent $i\in L$, Lemma \ref{lem:3contiguous} tells us exactly that $cost_i(Y)\leq cost_i(O)$. This may not be true, however, for agents in $R$. We will need to charge some of the costs of agents in $R$ to agents in $L$; to do this we define an arbitrary one-to-one mapping $m:R\rightarrow L$. We know this mapping can be one-to-one since $|L|\geq |R|$. 

    \begin{lemma}
        For any agent $i\in R$, we have that 
        $$cost_i(Y)+cost_{m(i)}(Y)\leq 3[cost_i(O)+cost_{m(i)}(O)]$$ \label{lem:charging}
    \end{lemma}

    \begin{proof}
        This lemma holds no matter what the types of $i$ and $m(i)$ are. We prove this for each of the four possible cases separately. Note that the treatment of Case 4 is significantly different from Cases 1-3.

        \noindent{\bf Case 1: } $i$ and $m(i)$ are both sum-type agents. Then, using the triangle inequality and the fact that for $m(i)\in L$ we already know $cost_{m(i)}(Y)\leq cost_{m(i)}(O)$ due to Lemma \ref{lem:3contiguous}, we have that
        \[
        \begin{aligned}
            cost_i(Y)+cost_{m(i)}(Y) &= d(i,Y) + d(m(i),Y)\\
            &\leq d(i,O) + d(m(i),O) + 2d(m(i),Y)\\
            &\leq d(i,O) + 3d(m(i),O)\\
            &= cost_i(O) + 3cost_{m(i)}(O)\\
            &\leq 3[cost_i(O)+cost_{m(i)}(O)]
        \end{aligned}
        \]

         \noindent{\bf Case 2: } $i$ and $m(i)$ are both max-type agents. As in Section \ref{sec:unknownTypes}, we denote by $y_i$ the farthest facility in a solution $Y$ from $i$. Then, similarly to Case 1, we have
        \[
        \begin{aligned}
            cost_i(Y)+cost_{m(i)}(Y) &= d(i,y_i) + d(m(i),y_{m(i)})\\
            &\leq d(i,O) + d(m(i),O) + d(m(i),y_i) + d(m(i),y_{m(i)})\\
            &\leq d(i,o_i) + d(m(i),o_{m(i)}) + d(m(i),y_{m(i)}) + d(m(i),y_{m(i)})\\
            &\leq cost_i(O) + 3cost_{m(i)}(O)\\
            &\leq 3[cost_i(O)+cost_{m(i)}(O)]
        \end{aligned}
        \]
        The first inequality above is due to the triangle inequality, the second due to the fact that average is at most the maximum, and the last is due to the fact that for $m(i)\in L$, we know that $cost_{m(i)}(Y)\leq cost_{m(i)}(O)$.

        \noindent{\bf Case 3: } $i$ is a sum-type agent and $m(i)$ is a max-type agent. Then, we have that
        \[
        \begin{aligned}
            cost_i(Y)+cost_{m(i)}(Y) &= d(i,Y) + d(m(i),y_{m(i)})\\
            &\leq d(i,O) + d(m(i),O) + d(m(i),Y) + d(m(i),y_{m(i)})\\
            &\leq d(i,O) + d(m(i),o_{m(i)}) + d(m(i),y_{m(i)}) + d(m(i),y_{m(i)})\\
            &\leq cost_i(O) + 3cost_{m(i)}(O)\\
            &\leq 3[cost_i(O)+cost_{m(i)}(O)]
        \end{aligned}
        \]
        The first inequality above is due to the triangle inequality, the second due to the fact that average is at most the maximum, and the last is due to the fact that for $m(i)\in L$, we know that $cost_{m(i)}(Y)\leq cost_{m(i)}(O)$.

        \noindent{\bf Case 4: } $i$ is a max-type agent and $m(i)$ is a sum-type agent. In the previous three cases, we did not really need to use the fact that agents and facilities are located on a line. This case, however, requires a different analysis. The reason why this case is different is because we can always charge sum to max (the average distance from an agent to facilities is at most the maximum distance), but cannot do the opposite. 
        
        Recall our definitions of sets $A$, $B$, and $C$, and that 
        $$a^1< a^2<\ldots< a^p< c^1< c^2<\ldots< c^{k-p}< b^1< b^2<\ldots< b^p.$$ Let $y_i$ be the farthest facility in $Y$ from $i$. Since we are on a line, it must be either the leftmost or rightmost facility of $Y$. If it is the rightmost facility of $Y$, then $i$ must be to the left of it, and $cost_i(Y)=d(i,y_i)\leq d(i,b^p)$, which means that $cost_i(Y)\leq cost_i(O)$ since $b^p\in O$, giving us the desired result. Thus we can assume that $y_i=a^1$, the leftmost facility of $Y$. 

        By triangle inequality, we have that $cost_i(Y) = d(i,a^1) \leq d(i,m(i))+d(m(i),a^1).$ By Lemma \ref{lem:2contiguous}, we also know that $d(m(i),a^1)\leq d(m(i),b^1)$, since $m(i)\in L$. Thus 
        \[
        \begin{aligned}
                    cost_i(Y) &\leq d(i,m(i)) + d(m(i),b^1)\\
                    &\leq d(i,m(i)) + d(i,m(i)) + d(i,b^1)\\
                    &\leq 2d(i,m(i)) + cost_i(O)
        \end{aligned}
        \]
        The last inequality above is because $b^1\in O$ and $i$ is a max-type agent. To bound $d(i,m(i))$, let $o^1$ be the leftmost facility of $O$ (so it is $b^1$ if $Y$ and $O$ have no intersection, and $c^1$ otherwise). We consider two subcases: {\bf Case 4a:} $o^1\leq m(i)$. Since $m(i)\in L$ and $i\in R$, this means that $o^1\leq m(i)\leq i$, so $d(m(i),i)\leq d(i,o^1)\leq cost_i(O)$, since $o^1\in O$ and $i$ is a max-type agent. {\bf Case 4b:} $m(i)< o^1$. For this case, we know that {\em all} facilities of $O$ are to the right of $m(i)$, since $o^1$ is the leftmost such facility. Therefore, $d(m(i),o^1)$  is the minimum distance from $m(i)$ to any facility in $O$, and so $d(m(i),o^1)\leq d(m(i),O)$. Therefore, combined with the fact that $i$ is a max-type agent, $d(m(i),i)\leq d(m(i),o^1)+d(o^1,i) \leq d(m(i),O) + d(i,o^1) \leq cost_{m(i)}(O) + cost_i(O)$. 

        In both cases 4a and 4b, we have that $d(m(i),i)\leq cost_{m(i)}(O) + cost_i(O)$. Therefore, combining this with the reasoning above and recall that $cost_i(Y)\leq 2d(i, m(i))+cost_i(O)$, we obtain that
        \[
        \begin{aligned}
            cost_i(Y)+cost_{m(i)}(Y) &= d(i,y_i) + d(m(i),Y)\\
            &\leq 2d(i,m(i))+ cost_i(O) + d(m(i),Y)\\
            &\leq 2cost_{m(i)}(O) + 2 cost_i(O) + cost_i(O) + d(m(i),Y)\\
            &\leq 3cost_i(O) + 3cost_{m(i)}(O)
        \end{aligned}
        \]
        The last inequality above is due to the fact that for $m(i)\in L$, we know that $cost_{m(i)}(Y)\leq cost_{m(i)}(O)$, as usual.

        Now that we have discussed all four cases, we can conclude that $cost_i(Y)+cost_{m(i)}(Y)\leq 3[cost_i(O)+cost_{m(i)}(O)]$, as desired.
    \end{proof}

    Most of the difficulty of this proof is in establishing the above lemma for mixed-type agents. With this lemma, we can now show the desired approximation bound. Let $L'$ be the agents $i$ in $L$ which do not have any $m(j)=i$, i.e., $L' = L\setminus m(R)$. Then, we have that
\[
\begin{aligned}
    \cmix(Y) &= \sum_{i\in \C}cost_i(Y)\\
    &=\sum_{i\in R}\left[cost_i(Y)+cost_{m(i)}(Y)\right] + \sum_{i\in L'}cost_i(Y)\\
    &\leq \sum_{i\in R}3\left[cost_i(O)+cost_{m(i)}(O)\right] + \sum_{i\in L'}cost_i(O)\\
    &\leq 3\sum_{i\in\C} cost_i(O) = 3\cdot\cmix(O)
\end{aligned}
\]
The first inequality above is due to Lemmas \ref{lem:charging} and \ref{lem:3contiguous}. This completes the proof of our approximation bound.
\end{proof}

\section{Unrestricted Locations} \label{sec:unres}

In this section, we will discuss the special case where agents are in a 1D line metric, and facilities can be placed anywhere on this line. In other words, $\F$ is an infinite continuous set of all possible locations. This setting has been thoroughly studied for the case when $k=1$, and the results are usually much stronger than for the restricted location setting where $\F$ is finite. This is also true for $k>1$, as we discuss below.

We begin by establishing some nice properties for this setting. Note that here we assume that multiple facilities can be placed in the same exact location, i.e., on top of each other. If that is not allowed, then our results hold in the limit with facilities being placed infinitessimaly close to one another. 

\begin{theorem}
    On a line with unrestricted facility locations, there must exist an optimal solution for $\cmix$ such that all chosen facility locations are identical.\label{thm:unres1}
\end{theorem}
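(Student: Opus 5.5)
We show that an arbitrary solution can be collapsed, step by step, to a single point without increasing $\cmix$. This is the continuous analogue of Lemma~\ref{lemma:move} and Theorem~\ref{thm:optnextto}. Since facilities may be placed on top of each other, the only nontrivial question is which collapsing moves are cost-nonincreasing.

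Take any optimal solution $Y=\{y^1\le y^2\le\cdots\le y^k\}$. Suppose not all $y^j$ coincide, so $y^1<y^k$. We compare two candidate moves:
\begin{itemize}
\item $Y_k$: move the rightmost facility $y^k$ to the location $y^{k-1}$;
\item $Y_1$: move the leftmost facility $y^1$ to the location $y^2$.
\end{itemize}
If $y^{k-1}=y^k$, we instead move the whole block of facilities at location $y^k$ to the next distinct location to its left, and symmetrically for $Y_1$.

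\textbf{Max-type agents.} For every max-type agent, exactly as in the proof of Lemma~\ref{lemma:move}, the cost is $\max\{d(i,y^1),d(i,y^k)\}$. Neither move extends the interval spanned by the facilities, so neither move increases this cost.

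\textbf{Sum-type agents.} Fix the moved block, consisting of $t$ facilities at location $y^k$, which are moved to $z<y^k$. Let $L$ be the sum agents at positions $\le z$ and $R$ the sum agents at positions $>z$. Each agent in $L$ saves exactly $\frac{t}{k}d(z,y^k)$. Each agent in $R$ loses at most $\frac{t}{k}d(z,y^k)$. Hence the sum-type change is at most $\frac{t}{k}(|R|-|L|)\,d(z,y^k)$, which is $\le 0$ whenever $|L|\ge|R|$.

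For the symmetric move $Y_1$ to a point $z'$, let $R'$ be the sum agents at positions $\ge z'$. The change is $\le 0$ whenever $|R'|\ge$ the number of sum agents at positions $<z'$.

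\textbf{Main obstacle: one of the two moves always works.} Here $z$ and $z'$ differ, so, unlike in Lemma~\ref{lemma:move}, we cannot simply split the sum agents at a single point. The argument runs as follows:
\begin{itemize}
\item If $|L|<|R|$, then more than half of the sum agents lie strictly right of $z$.
\item Since $z'\le z$, all of those agents are also $\ge z'$, so $Y_1$ is cost-nonincreasing.
\end{itemize}
So in every case one of the two moves does not increase $\cmix$.

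\textbf{Termination.} Each move strictly decreases the number of distinct facility locations, so after finitely many steps we reach a solution with all facilities at one point and cost at most $\cmix(Y)$. That solution is therefore optimal.

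An alternative route avoids the iterative collapse altogether: place all $k$ facilities at a median of the sum agents, chosen within $[y^1,y^k]$. This never increases any max cost, and it minimizes the sum-type cost among configurations inside the interval. However, it requires a careful choice of median when the sum agents' median lies outside $[y^1,y^k]$. In that case we clamp to the nearer endpoint, and the needed argument is exactly that clamping to an endpoint is cost-nonincreasing for sum agents. The iterative argument above is the cleaner one, and it is the one I would write.
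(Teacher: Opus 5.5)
\textbf{Gap in the two-location case.} Your iterative collapse is sound whenever the current solution has at least three distinct locations. The step you identify as the main obstacle, however, fails exactly when there are only two. In that case the ``next distinct location'' to the left of $y^k$ is $z=y^1$, and the one to the right of $y^1$ is $z'=y^k$. So $z'>z$, and the inference that all sum agents strictly right of $z$ are $\ge z'$ breaks.

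This is not a corner case. For $k=2$ it is the very first step, and for every $k$ the final step of your procedure goes from two distinct locations to one. So as written the argument never reaches a fully collapsed solution.

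Concretely, take $k=2$, $Y=\{0,2\}$ and a single sum agent at $1$. Then $|L|=0<|R|=1$, while $R'=\emptyset$, so neither of your sufficient conditions holds. The bound ``each agent in $R$ loses at most $\frac{t}{k}d(z,y^k)$'' is far from tight for agents strictly between the two locations.

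\textbf{How to repair it.} The conclusion is still true in this case, and the fix is short. Suppose there are exactly two locations $y^1<y^k$, carrying $t'$ and $t$ facilities respectively. The total sum-type changes of your two moves are $\frac{t}{k}D$ and $-\frac{t'}{k}D$, where $D=\sum_{i\in S}\left(d(i,y^1)-d(i,y^k)\right)$. In other words, the sum-type cost is linear in the fraction of facilities at $y^k$, so one of the two moves is nonincreasing. Your interval argument already handles the max-type agents.

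Alternatively, the one-shot route you sketch at the end avoids the issue entirely. We have $\sum_{i\in S}d(i,Y)=\frac{1}{k}\sum_{j}\sum_{i\in S}d(i,y^j)\ge\min_{x\in[y^1,y^k]}\sum_{i\in S}d(i,x)$. Placing all $k$ facilities at a minimizer $x^*\in[y^1,y^k]$ therefore does not hurt the sum agents. It also does not hurt the max agents, since $d(i,x^*)\le\max\{d(i,y^1),d(i,y^k)\}$.

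Note that the paper simply points back to Lemma~\ref{lemma:move}. That lemma moves either endpoint to a single common target $z$, which is what makes the split of sum agents at $z$ work. Your version uses two different targets, so it needs the extra two-location case.
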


\begin{proof}
    The argument is exactly the same as that of Theorem \ref{thm:optnextto}, showing that if there is a gap between two facility locations in a solution, then we can create a new solution with a smaller gap without making this solution worse. Thus there always exists an optimal solution $O$ such that $O = \{o,o\cdots, o\}$ for some location $o$.
\end{proof}

Moreover, for this nice setting of unrestricted facility locations, placing all facilities on top of (or as close as possible to) the median agent actually results in the optimum solution for any mix of agent types.

\begin{theorem}\label{thm:typeline}
    On a line with unrestricted facility locations, placing all facilities on top of the median agent gives an optimal solution for {\cmix}, {\css}, and {\csm}.
\end{theorem}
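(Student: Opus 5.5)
The plan is to use Theorem \ref{thm:unres1} to reduce to solutions consisting of a single point repeated $k$ times, and then invoke the classical fact that the median minimizes the sum of distances on a line.

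\textbf{Step 1: collapsed solutions.} For a collapsed solution $\{o,\ldots,o\}$, every agent's cost equals $d(i,o)$, whether the agent is of max or sum type. This is because the maximum and the average of $k$ equal numbers coincide. Hence
\[
\cmix(\{o,\ldots,o\})=\csm(\{o,\ldots,o\})=\css(\{o,\ldots,o\})=\sum_{i\in\C}d(i,o).
\]

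\textbf{Step 2: reduce each objective to a collapsed solution.} By Theorem \ref{thm:unres1}, there is an optimal solution for $\cmix$ of the form $\{o,\ldots,o\}$. This holds for every partition of $\C$ into $M$ and $S$. Since $\csm$ and $\css$ are the special cases $M=\C$ and $S=\C$, the same theorem gives collapsed optimal solutions for those objectives as well. So for each of the three objectives, the optimum value equals $\min_{o}\sum_{i\in\C}d(i,o)$, where $o$ ranges over the real line.

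\textbf{Step 3: the median attains this minimum.} The function $\sum_i |i-o|$ is minimized at any median of the agent locations. The standard argument is a slope count: moving $o$ to the right by $\epsilon$ changes the sum by $\epsilon(\#\{i\le o\}-\#\{i>o\})$, which is nonnegative once $o$ is at or past the median and nonpositive before it. When $n$ is even, every point between the two middle agents, including the average used by $Med^*$, is a minimizer. Therefore placing all facilities at the median gives cost $\min_o\sum_i d(i,o)$, which equals the optimum for $\cmix$, $\csm$ and $\css$ simultaneously.

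\textbf{Main obstacle.} The only real subtlety is that Theorem \ref{thm:unres1} is proved by adapting Lemma \ref{lemma:move}, which "moves $y^1$ or $y^k$ into a gap". In the continuous setting we must confirm two things. First, the moving procedure terminates, or passes to a limit. Second, the optimum exists.

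I would handle this directly rather than by iteration. Given any solution $Y$, let $L$ and $R$ be the sets of sum agents to the left and right of a point $z$ inside the hull of $Y$. The argument of Lemma \ref{lemma:move} shows that sliding the extreme facility toward $z$ on the side chosen by comparing $|L|$ and $|R|$ never increases the cost. Applying this repeatedly, or collapsing all facilities at once onto a suitable point of the hull, yields a collapsed solution whose cost is no larger than that of $Y$.

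Existence of the optimum is then immediate. The collapsed cost is a continuous piecewise-linear function of $o$, and it attains its minimum at the median. So no compactness issue arises.
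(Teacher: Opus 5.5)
Your proposal is correct and follows the paper's proof: you invoke Theorem \ref{thm:unres1} to get a collapsed optimum, observe that max and sum costs coincide on $\{o,\ldots,o\}$, and use that the median minimizes $\sum_{i}d(i,o)$. Your extra paragraph on termination and existence tightens a point that the paper's proof of Theorem \ref{thm:unres1} leaves informal; to make ``a suitable point of the hull'' concrete, collapse onto the mean $\frac{1}{k}\sum_j y^j$: by convexity of $|\cdot|$ no sum agent's cost rises, and since the mean lies in $[y^1,y^k]$ no max agent's cost rises either.
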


\begin{proof} Due to Theorem \ref{thm:unres1}, we know that there always exists an optimal solution $O = \{o,o\cdots, o\}$ with all facilities at the same location. Note that for such a solution, since $O$ contains only one location, the cost for each agent is the same regardless of their type, i.e., $cost_i(O)=d(i,O)=d(i,o)$ for both max and sum agents. Thus to find the optimum solution, we simply need to choose the location $o$ which minimizes $\sum_{i\in \C}cost_i(O) = \sum_{i\in C}d(i,o)$. Choosing $o$ to be the location of the median agent minimizes this sum, and so $O$ is an optimal solution for {\cmix}, for any mix of agent types (and thus is also optimal for {\css} and {\csm}). 
\end{proof}



With the above properties of optimal solutions, we can now consider strategyproof mechanisms. For unknown agent types (but known locations), Theorem \ref{thm:typeline} implies that we can simply put all facilities at the median agent, and obtain an optimal solution for $\cmix$. This is a strategyproof mechanism, so we can find the exact optimal solution for the case when $k>1$, as long as the metric space is a line and facility locations are unrestricted. For unknown agent locations, we can similarly just put all facilities on top of the median agent. This is the standard median mechanism used for the case when $k=1$, it is strategyproof, and due to the above theorems it results in the optimum solution for $\cmix$. Therefore, for unrestricted facility locations on a line, the strategyproof median mechanism always finds the optimum solution, just as it does for the classic setting of $k=1$ and no different agent types.



\section{Conclusion}
In this paper, we analyzed the quality of solutions formed by deterministic strategyproof mechanisms for our setting. This is the first such analysis for facility location in which we must select $k$ facilities in order to minimize the social cost, with agents being either max or sum types of individual cost functions. For the case when the types are private but agent locations are public knowledge, our results are very general, applying to arbitrary metric spaces, arbitrary values of $k$, and an arbitrary mix of sum and max agents. These results show that, especially for smaller $k$, we can easily form solutions which are only a small constant away from optimum, and can do better if we know the fraction of the total agents which are max-type. For the case when agent locations are private, we extend existing results about the median mechanism to hold for arbitrary $k$, which requires a new analysis due to their being a mix of agent types instead of all agents being the same type. Our results are significant generalizations and additions to existing knowledge.

Our results show the importance of collecting information about agent types vs about their locations, and show that it is possible to produce good outcomes even without such information. Nevertheless, many open questions remain. The most conspicuous open question is to improve the gap between our upper and lower bounds for the case when agent types are private. As mentioned in the paper, is seems significantly more difficult to form lower bounds for this case as compared to the ``private agent locations'' case (see also \cite{gai2024two}, which has an upper bound of $2n+1$ when types are private, but a lower bound of only $7/6$), so new insights and constructions would be needed to form such lower bounds. A more general research direction would be to consider different types of agents, and form mechanisms with good approximations for more than just a mix of max and sum agents. For example, the authors believe that similar results should be possible when agents are any of the $k$ different $\ell$-centrum types, i.e., $cost_i(A)$ is the sum of the largest $\ell$ distances to facilities in $A$. It would be interesting to quantify the classes of agent types for which there always exist strategyproof mechanisms with good approximation ratios, even if the agent types are not known to the mechanism.

\subsection*{Acknowledgments}
This work was partially supported by National Science Foundation award CCF-2006286.

\bibliography{refs}


\end{document}